\documentclass[12pt]{amsart}
\usepackage{latexsym,fancyhdr,amssymb,color,amsmath,amsthm,graphicx,listings,comment}
\newtheorem{thm}{Theorem} \newtheorem{lemma}{Lemma} 
 
\setlength{\parindent}{0cm} 
\let\paragraph\subsection

\title{On Particles and Primes}
\author{Oliver Knill}
\date{Aug 21, 2016}
\address{Department of Mathematics \\ Harvard University \\ Cambridge, MA, 02138 }
\subjclass{11R52, 20G20, 11A99, 11Z99, 17A35}
\keywords{Quaternion arithmetic, particle phenomenology}


\begin{document}
\maketitle

\begin{abstract}
Primes in the two complete associative normed division algebras 
$\mathbb{C}$ and $\mathbb{H}$ have affinities with structures seen in the standard model 
of particle physics. On the integers in the two algebras, there are two equivalence 
relations: a strong one, related to a $U(1)$ and $SU(3)$ symmetry
allowing to permute and switch signs of the coordinates of the integers,
as well as a weak relation with origins from units $U(1),SU(2)$ in 
the algebra. Weak equivalence classes within the strong equivalence classes of 
odd primes in $\mathbb{C}$ case relate to leptons, the inert ones being neutrino like, 
and the split ones resembling electron-positron pairs. In the $\mathbb{H}$ case, for odd primes, the 
equivalence classes come in groups of two or three, leading to a caricature of hadrons 
featuring either mesons built by a quark pair or then baryons obtained by quark triplets.
We can now list for every rational prime $p$ all these particles and attach fractional 
charges to its constituents. 
\end{abstract}

\section{Introduction}

\paragraph{} 
When experimenting with primes in division algebras, an
affinity of primes in $\mathbb{C}$ and $\mathbb{H}$ with the structure in the 
standard model of particle physics emerged: 
primes in $\mathbb{C}$ resemble leptons, while equivalence classes of 
primes in $\mathbb{H}$ have Hadron like features \cite{Experiments}.
In $\mathbb{C}$, where besides the ramified Gaussian primes of norm $2$ like $1+i$,
two other type of primes exist, there is a lepton structure:
in $\mathbb{C}$ we see cartoon versions of neutrini as well as electron-positron pairs. 
In $\mathbb{H}$, where besides the ramified primes above $p=2$, only one type of primes 
exists, modulo $U(3)$ gauge transformations, the structure of $SU(2)$-equivalence classes appear
to be `hadrons": there are either mesons or baryons, where the individual elements of the 
equivalence classes resemble quarks. 

\paragraph{} 
The algebras $\mathbb{C},\mathbb{H}$ are naturally distinguished as these two are the only 
complete associative normed division algebras. Not only the unit spheres $U(1),SU(2)$ of
in $\mathbb{C},\mathbb{H}$ but also the Lie group $SU(3)$ acts on spheres in $\mathbb{H}$ 
by diffeomorphisms rotating three complex planes hinged together at the real axes. This non-linear
action allows to implement more symmetries or extend any quantum dynamics from $\mathbb{C}$ 
to $\mathbb{H}$ valued fields. These symmetries lead to finite norm preserving group 
actions on quaternion integers and can be implemented 
without mentioning Lie groups: the weak equivalence is obtained by using multiplication by units 
in $\mathbb{H}$; the strong symmetries on integers $(a,b,c,d)$ in $\mathbb{H}$ 
is generated by the 24 permutations 
or sign changes. Obviously this can be realized using both rotations or reflections in $U(1)$ 
or then by permutations and sign changes in the space components $(b,c,d)$ of the quaternion.
The upshot is that every integer has a strong equivalence class $(a,b,c,d)$ with non-negative 
entries where $a \leq b \leq c \leq d$. If we look at weak equivalence classes in such
integer classes, the result is that each class has either one, two or three elements, where
the case with one element only happens if $N(a,b,c,d)=2$. 

\paragraph{} 
The fact that the major gauge groups of the standard model appear as
symmetries in $\mathbb{H}$ suggests to look at {\bf quaternion quantum mechanics}, 
where waves are $\mathbb{H}$-valued. 
Such a physics could be realized by looking at quaternion valued wave
equations developed by Fueter \cite{Fueter34} who also noticed that the Dirac equation
in quaternion analysis plays the role of the Cauchy-Riemann differential equations. 
Alternatively, one can extend linear or nonlinear wave evolutions 
from $\mathbb{C}$-valued fields in sub planes to quaternion-valued fields: 
just evolve simultaneously the classical waves on planes spanned by $1$ and 
one of the spacial units like $i,j,k$ in $\mathbb{H}$. 
If $L$ is any self-adjoint Hamiltonian operator, then instead of 
quantum dynamics $\psi(t) = \exp(i L t) \psi$ for a wave $\psi=(a/\sqrt{3},b)$,
one can consider two other waves $\phi=(a/\sqrt{3},c),\theta=(a/\sqrt{3},d)$ and evolve
all simultaneously using $(\psi,\phi,\theta)' = i(L\psi,L\phi,L\theta)$. This assures that
$C=|\psi(t)|^2+|\phi(t)|^2+|\theta(t)|^2$ stays invariant. The three evolutions can be merged
together to a quaternion with this norm $C=N(a,b,c,d)$ by
putting $b(t)={\rm Im}(\psi(t)),c(t)={\rm Im}(\phi(t)),d(t)={\rm Im}(\theta(t))$ and 
$a(t)= \sigma |(\psi,\phi,\theta)|$ with sign $\sigma$ chosen so that
$(a,b,c,d) \to (a(t),b(t),c(t),d(t))$ has positive Jacobean. This assures that 
$N(a,b,c,d)=a^2+b^2+c^2+d^2$ is time invariant. The second equivalence relation, 
the $SU(2)$ symmetry comes from the Cayley-Dickson picture expressing 
$(a,b,c,d)$ as a pair of two complex numbers $z=a+ib,w=c+id$ for which 
$|z|^2+|w|^2 = N(a,b,c,d) = {\rm det}(A)$ with the complex matrix $A$
built by Pauli matrices. 
These weak symmetries on $(z,w)$ can be implemented as determinant preserving unitary 
$2 \times 2$ matrices and so by elements in $SU(2)$. 
Obviously, the weak and strong equivalence classes are different. 
Looking at both together in $\mathbb{H}$ gives the Hadron structure on primes.
This amusing fact is is pure finite combinatorics.

\paragraph{} 
One can ask now why number theory should matter in quantum dynamics, 
as a random quaternion is almost surely never an integer. 
One can look however at the dynamics on a 
``geometry", a space $X$ with exterior derivative and measure $\mu$, 
where one classically evolves waves in $L^2(X,\mu,\mathbb{C})$,
moving according to some dynamics in the Hilbert space $L^2(X,\mu,\mathbb{H})$ with norm 
$||\Psi||_2 = \int_X N(\Psi(x)) \; d\mu(x)$. 
The quaternion quantum evolution preserves this norm but at individual points $x$, 
the wave amplitude changes. Gauge symmetry now can render integers relevant: 
if a wave amplitude reaches integer arithmetic $\mathbb{H}$ norm $N(\Psi(t))$ 
at some point, the wave value $\Psi(t)$ can be gauged within the gauge groups
to become $\mathbb{H}$-integer valued at this point. 
Its prime factorization structure features now particles at this point. 
The switch of factorizations could now be seen as particle processes in which Fermions 
= odd primes exchange vector gauge bosons = units or ramified primes = neutral bosons. 
Unit migration is an exchange of bosons and recombination is particle pair 
creation or annihilation, mending in some way also the particle-wave duality conundrum.

\paragraph{} 
This picture most certainly is just an allegory or caricature even in the realm of particle 
phenomenology alone when disregarding dynamics: the reason is that even the simplest processes 
like {\bf beta decay of a neutron} $n \to p+e^-+\mu_e$ to a proton, 
electron, electron neutrino, or a {\bf pion decay} $\pi^+  = e^+ + \mu_e$, 
where a meson decays into two leptons can not
be explained by unit migration, recombination or meta commutation in $\mathbb{H}$ alone. Also, the
way, how charge is defined below, no baryons do exist for which all three quarks have charge $2/3$. 
This violates the existence of the $\Delta^{++}$ baryon. 
We believe however that the definition of charge given here can be modified to incorporate 
this. We have just chosen an algorithm (using Lipschitz or Hurwitz primes) 
which gives us a deterministic charge. There is no other reason why 
$(2/3,2/3,2/3)$ charge triplets are excluded. \\

\paragraph{} 
The picture drawn here is probably of little value for physics as the later by definition 
requires to be able to do {\bf quantitative predictions} or {\bf quantitative verifications} 
of experimentally observable processes. And the picture drawn here does neither.
We feel however that the story is of mathematical interest and that 
it motivates to look more closely at higher arithmetic in $\mathbb{H}$ as well at
the mathematical structure of standard model which appears to suffer from a
{\bf lack of inevitability}. While the standard model is one of the most successful physical
theories with excellent match between experiment and model, the structure of the involved
Lagrangians is complicated even if one looks at it from a non-commutative geometry point
of view. 

\paragraph{} 
Affinities between mathematical structures and physical phenomena are sometimes useful,
sometimes just amusing. Here is an example of the more amusing type \cite{Golomb}:
when looking at the Rubik group, a finite group, there are permutations
available when disassembling and recombining the actual cube which resemble quarks: turn a corner
cube by 120 degrees for example. Rotating one corner and other corner in the opposite direction
gives a quark-anti quark meson and this permutation is realizable. 
Turning three corners by 120 degree can physically be realized and 
resembles a baryon. Also this picture is useless for physics but it is of some value as it 
helps to learn more about the structure of that particular finite group. 

\paragraph{} 
The standard model does not answer why the number of generations of hadrons or leptons is 
limited to three or why the gauge groups are not unified to a larger group like $SU(5)$ predicted by some
grand unified theories. We have seen that $\mathbb{H}$ naturally features the gauge groups of
the standard model, even so the $SU(3)$ symmetry acts only by diffeomorphisms. 
The three generations can emerge naturally in a quaternion-valued quantum mechanics as we 
have to evolve waves in three different planes. Since the time scales in the different planes
are in general different (as there is no reason why they should agree), 
a particle involving in a faster plane appears to be lighter. Better than {\bf imposing} a wave
evolution is to let the system just evolve freely in its isospectral set. Wave dynamics then 
emerges naturally \cite{IsospectralDirac2} in the form of isospectral
Lax deformations of exterior derivatives leading to three generations of geometries.
As both leptons and hadrons move in at least one plane and more likely in 2 or 3 planes 
at the same time, the probability of having a light particles is larger. 
The heavier particles like the top quark are more rare. 
The neutrini in the intersection of these planes participate to any of three dynamics producing
thought associations with the observed phenomenon of neutrino oscillations. 

\paragraph{} 
Number theory in $\mathbb{H}$ has started with Hurwitz \cite{Hurwitz1922} who built the setup and
established some factorization features. The fundamental theorem of arithmetic in $\mathbb{H}$ was 
completed \cite{ConwaySmith}. The fact that quaternions were an outcast in part of the 20th century
can historically be traced to the success of vector calculus, especially as formulated by Gibbs and
Wilson \cite{GibbsWilson}, a book so successful that its content not only structured practically 
all modern calculus textbooks but also removed quaternions from the curricula. 
Quaternion calculus was still cultivated, like by the Swiss number theorist Rudolf Fueter who 
found a Cauchy integral theorem for quaternions (see \cite{Fueter34,Fueter35,Deavours73}). 
There is certainly much still to be explored. Only recently, the permutation structure of 
the meta-commutation in the prime factorization has been studied for the first time \cite{CohnKumar}. 

\section{The quaternion algebra}

\paragraph{}  
By the {\bf Hurwitz theorem} \cite{Hurwitz1922},
the algebras $\mathbb{R},\mathbb{C},\mathbb{H},\mathbb{O}$ form 
a complete list of all {\bf normed division algebras}. 
Related to the Hurwitz theorem is the {\bf Frobenius theorem}
\cite{Frobenius1879} which tells that the only associative real division algebras 
are $\mathbb{R},\mathbb{C},\mathbb{H}$ and the {\bf Mazur theorem} which assures 
that the only {\bf commutative Banach division algebras} are $\mathbb{R}$ and $\mathbb{C}$.
The associative, algebraically complete division algebras are $\mathbb{C}$ and 
$\mathbb{H}$. 
{\bf Algebraic completeness} in $\mathbb{H}$ means that every polynomial 
$\sum_i a_i x^i=0$ has a solution $x$ \cite{EilenbergNiven}. Completeness in general fails:
$ix-xi+1=0$ has no solution (an example from \cite{Widdows}) or have unexpected solution sets: 
$x^2=-1$ has an entire $2$-sphere as solutions and 
there are polynomials like $f(x)=x^2 i x i + i x^2 i x- i x i x^2-x i x^2 i$ for which $\mathbb{H}$ 
is the solution set (see \cite{Numbers}). 
By choosing a basis, $\mathbb{H}$ contains three linearly independent 
complex sub-rings generated by $1$ and a choice of spacial vectors $(0,b,c,d)$ 
satisfying $b^2+c^2+d^2=1$.

\paragraph{} 
It is natural to ask for a relation between the Frobenius and Hurwitz theorem.
Note that the Frobenius theorem does not assume ``normed" but makes an associativity assumption. 
But there seems no obvious link between the Frobenius and Hurwitz statements, as the class of
"normed division algebras" and "division algebras" is different and
associativity is quite a strong assumption. The question asks whether there
could be dependencies between the proofs of the theorems. The answer is probably "no":
going from Frobenius to Hurwitz requires to get rid of
the attributes "associative, finite dimensional" and adding "normed" instead.
Historically, Hurwitz paper was only published posthumously and does not 
cite Frobenius even so Hurwitz (1859-1919) and Frobenius (1849-1917) were contemporaries
and both worked in Zurich. Frobenius was at ETH Zuerich between 1875 and 1892. When
Frobenius took over Kronecker's chair in Berlin, Hurwitz in turn took over his chair at ETH
from 1892 until his death in 1919. 

\paragraph{}  
The lattice $\mathbb{I}$ of integers = {\bf integer quaternions} is a maximal order in $\mathbb{H}$,
where the notion of {\bf order} in non-commutative algebras has been defined by Emmy Noether: \cite{Lam2001}:
an order is just a sub-algebra which is a lattice; it is {\bf maximal} if it is not contained in a 
larger order. The Lipschitz integers in $\mathbb{H}$ form an order but only when adjoining the
Hurwitz quaternions, one gets a maximal order. 
The lattice is also known as the densest lattice packing of $\mathbb{H} = \mathbb{R}^4$ by unit spheres,
the integer quaternions are partitioned into two different type of integers, 
the {\bf Lipschitz quaternions} and the {\bf Hurwitz quaternions}. 
The Lipschitz quaternions consist of vectors $z=(a,b,c,d)$ with 
rational integers $(a,b,c,d)$. The Hurwitz quaternions are of
the form $(a,b,c,d)+(1,1,1,1)/2$ with rational integers $a,b,c,d$.
The set $\mathbb{P}$ of {\bf quaternions primes} are integer quaternions for which 
the {\bf arithmetic norm} $N(z)= a^2+b^2+c^2+d^2$ is prime. 
For $z=(a,b,c,d) \in \mathbb{P}$, the {\bf conjugate prime} is defined as $(a,-b,-c,-d)$. 
The symmetric group $V$ of permutations on $\{1,2,3,4\}$ acts on integer
quaternions and leaves primes invariant. Also the group $W$ generated by
elements in $V$ and the conjugation involution acts on $\mathbb{P}$. The group $W$ can not
be realized within $SU(3)$ but would need the larger group $U(3)$ and wipe out the charge
information. Having determinant $-1$, the conjugation involution $(a,b,c,d) \to (a,-b,-c,-d)$ is
not realized in $SU(3)$.

\paragraph{}  
Any positive definite quadratic form $N$ in $\mathbb{R}^n$ defines a lattice. In such a
lattice, the {\bf form-primes} are the integers
for which $N(z)$ is a rational prime. As Hurwitz showed in \cite{Hurwitz1922}, the primes in $\mathbb{H}$ 
are in one-to-one correspondence with the form-primes in $\mathbb{R}^4$. 
The quaternions are also unique among division algebras in that rational and 
arithmetic primes are the same. For Gaussian primes $\mathbb{C}$ for example,
real primes $p=4k+3$, primes of norm $2$ and primes of norm $p$ which is a prime of the form 
$p=4k+1$ are distinguished. Also the quadratic form $N(x,y) = x^2+xy+y^2$ on Eisenstein integers
generates only most of the Eisenstein primes, the split ones. In that case, $p=3$ are the 
ramified primes. The inert primes, the primes on the integer axes $\mathbb{Z}$ or $w \mathbb{Z}$ 
are not quadratic form primes among the Eisenstein integers, because there, the square root 
of $N$ is prime and not the norm $N$.

\paragraph{}  
A multiplication on $\mathbb{H}$ is defined by writing $z=a+bi+cj+dk$ with {\bf space units}
$i,j,k$ satisfying the Hamilton relations $i^2=j^2=z^2=ijk=-1$ and extending this linearly. 
The spacial part is generated by the units $i,j,k$. Together with $1$, they form a basis. 
While the quadratic equation $x^2=1$ has the only solution $\pm 1$ in $\mathbb{H}$,
the quadratic equation $x^2=-1$ in $\mathbb{H}$ already has already 
infinitely many solutions $(0,b,c,d)$ with $b^2+c^2+d^2=1$. 
The rational unit $1$ paired with one of the space units $(0,b,c,d)$ forms
a two-dimensional plane, which is also a complete sub-algebra isomorphic to $\mathbb{C}$. 
Except for $\mathbb{H}$ itself, these complex planes are the only complete normed division 
sub-algebras of $\mathbb{H}$. If we want the sub-algebra to have their integers as part of the
integers in $\mathbb{H}$, there are only three sub planes left and these are 
the ones generated by $1$ and a choice of $\{ i,j, k \}$. 

\paragraph{}  
If ``geometry" is a geometric space with an exterior derivative $d$ leading to Laplacian $L=(d+d^*)^2$, 
we have wave dynamics or isospectral Lax deformations of the derivative $d$. Examples of geometries are
compact Riemannian manifold or a finite simple graphs. 
If we look at the wave equation and want to be able to reach from a point $x$ in the geometry
to a point $y$, we need to use a complex variable. The wave equation $u_{tt}=-Lu$ with Laplacian $L=D^2$
has the explicit solution $u(t) = \cos(Dt) u(0) + i \sin(Dt) D^{-1} u'(0)$, where $D^{-1}$ is
the inverse of $D=d+d^*$ on the orthogonal complement of the kernel and $d$ is the exterior derivative. 
This can be written as a {\bf complex Schr\"odinger wave} $\psi(t) = e^{i D t} \psi(0)$ with
complex wave $\psi(t) = u(t) - i D^{-1} u'(t)$. Since complex sub-algebras exist in $\mathbb{H}$, we also 
look at Gaussian primes. Real sub-algebras like $\mathbb{R}$ generated by $1$ are 
not complete and some of their primes decay like $p=5$ which decays into $1 \pm 2i$. 
But primes which are initially in one of the three arithmetic complex wave subspaces, are
also primes in $\mathbb{H}$. 

\paragraph{} 
The linear map $\phi$, mapping $z \in \mathbb{H}$ to 
$\phi(z)=Z=(a/\sqrt{3}+ib,a/\sqrt{3}+ic,a/\sqrt{3}+id)) =(a_1+ib',a_2+ic',a_3+id') \in \mathbb{C}^3$
satisfies $|Z|^2=N(z)$. Conversely, given $Z=(a_1+ib',a_2+ic',a_3+id')$, define $a'$ by 
$a'^2 = a_1^2+a_2^2+a_3^2$, leading to  $\eta(Z)=(a',b',c',d')$ which satisfies $|Z|^2=N(\eta(Z))$. 
It satisfies $\eta(\phi(z))=z$. Given $U \in SU(3)$, define 
$\psi(U)=\eta U \phi: \mathbb{H} \to \mathbb{H}$. 
By construction, $\psi(U)$ preserves the unit sphere in $\mathbb{H}$ but it is neither linear 
nor invertible.  Any quaternion with prime amplitude $p$ can now be gauged to be an integer. 
The unitary group $U(3)$ allows to realize any permutation of $(a,b,c,d) \to \pi(a,b,c,d)$ of the four
coordinates of a quaternion as well as to perform sign changes like $(a,b,c,d) \to (a,-b,c,d)$. 
In other words, modulo gauge transformations given by the $U(3)$ action, every quaternion is 
equivalent to a quaternion $(a,b,c,d)$ with $0 \leq a \leq b \leq c \leq d$. 
The observation which led us to write this
down was that the $SU(2)$ equivalence classes of these classes come now in groups of $2$ or $3$
for odd primes which looks like the meson and baryon structure for Hadrons.

\section{Units and primes}

\paragraph{} 
We have seen that quaternions play a distinguished role among all algebraic structures: 
$\mathbb{C}$ and $\mathbb{H}$ are
the only algebraically complete associative division algebras, and as $\mathbb{H}$ 
is the maximal one containing $\mathbb{C}$ as proper complete sub algebras:
if $X$ is a sub-algebra of $\mathbb{H}$ containing $\{0,1\}$, then it must contain the real
line $\mathbb{R}$ and the square root of $-1$. In $\mathbb{H}$, any  solution $v$ of $v^2=-1$
x is of the form $(0,b,c,d)$ with $b^2+c^2+d^2=1$. 
If it contains besides $1$ and a solution to $v^2=-1$ a third linearly independent element, then 
by Frobenius, $X=\mathbb{H}$, as there is no other algebra between $\mathbb{C}$ and $\mathbb{H}$. 

\paragraph{} 
The algebra $\mathbb{H}$ has a natural representation in the algebra of complex $2 \times 2$ matrices:
every $z=(a,b,c,d) \in \mathbb{H}$ defines a matrix
$A=a \sigma_0 + b \sigma_1 + c \sigma_2 + d \sigma_3 \in GL(2,\mathbb{C})$, where $\sigma_0=I$ and 
$\sigma_i$ are the {\bf Pauli matrices}. Now $A(z) A(w) = A(z w)$ and $N(z) = {\rm det}(A(z))$. 
The {\bf Cauchy-Binet determinant formula} ${\rm det}(AB) = {\rm det}(A) {\rm det}(B)$ for
matrices $A(z)=\left[ \begin{array}{cc} a+ib & c+id \\ c-id & a-ib \end{array} \right]$
gives the identity $N(z w) = N(z) N(w)$. The unit sphere in $\mathbb{H}$ is $SU(2)$ which
happens to be a Lie group. The normed division algebras $\mathbb{C}$ and $\mathbb{H}$ 
can be distinguished by the fact that their unit spheres are continuous Lie groups
as this property eliminates both $\mathbb{R}$ and $\mathbb{O}$. The only Lie groups which 
are spheres are $S^k$ for $k=0,1,3$ and the only continuous ones are $S^k$ with $k=1,3$. 
(A short proof of that statement was sketched by \cite{DeVito2010}: if the Lie group $G$ is Abelian 
then the Lie algebra must be $\mathbb{R}^n$ so that it must be a universal cover of $G$. This only works 
if $G$ is not simply connected, implying $G=S^1$ as this is the only not simply connected but 
connected sphere. If $G$ is non-Abelian, define the 3-form $t(x,y,z)=([x,y],z)$ which by the 
non-Abelian assumption is not the zero form.  One can show that it is left and right invariant and 
so closed but not exact, leading to a nontrivial cohomology $H^3(G)$. 
This forces $G=S^3$ as this is the only Euclidean sphere with non-vanishing $H^3$.) 

\paragraph{}  
There are 24 units in $\mathbb{H}$; 16 of them are {\bf Hurwitz units} $(a,b,c,d)/2$ 
with $a,b,c,d \in \{-1,1\}$ and 8 of them are  {\bf Lipschitz units} $\pm 1, \pm i, \pm j \pm k$.
The Lipschitz units are permutations of $(\pm 1,0,0,0)$, the
Hurwitz integers are permutations of $(\pm 1,\pm 1, \pm 1, \pm 1)/2$.
The finite set $U$ of units forms the {\bf binary tetrahedral group} which
can also be written as the special linear group $SL(2,3)$ over $\mathbb{Z}/(3\mathbb{Z})$. 
As quaternions can be represented as $2 \times 2$ matrices,
$U$ is also a discrete finite subgroup of the unitary group $SU(2)$.  

\paragraph{} 
The group $SU(2)$ is also known as the compact symplectic group $Sp(1)$ or the 
spin group ${\rm Spin}(3)$. 
The finite group $U$ of units in $\mathbb{H}$
can be identified as the semi-direct product of the {\bf quaternion sub group}
$Q$ built by the $8$ Lipschitz units and the cyclic group $Z_3$, generated by conjugation
$i \to j \to k \to i$. It is finitely presented satisfying
$(ab)^2=a^3=b^3=1$ for the generators $a=(1+i+j+k)/2$ and $b=(1+i+j-k)/2$. 
All cyclic subgroups have order $2,3$ or $6$. The element $-a$ for example generates 
a cyclic group of order $6$. The only units for which all entries are non-negative are
$(1,0,0,0),(0,1,0,0),(0,0,1,0),(1,1,1,1)/2$.  The fact that each of the 16 regions like
$a>0,b>0,c>0,d>0$ contains exactly one Hurwitz unit will play a role combinatorially
when looking at weak equivalence equations. 

\paragraph{}   
In $\mathbb{C}$, the integers have only one type, but primes
appear in three flavors: {\bf inert}, {\bf split} or {\bf ramified} depending on how they 
factor in the field extension: the real ones with prime $\sqrt{N(p)}$ of the form $4k+3$, the
ones with prime $N(p)$ of the form $4k+1$ and then the primes of norm $N(z)=2$. There are
4 units $\{1,i,-1,-i \;\}$. The group $U$ of units is the cyclic group $C_4$. There is an
other group $W$, the group of all permutations of the coordinates. 
In $\mathbb{C}$, it is the dihedral group $D_4$. 
Every equivalence class of $W$ can be represented as $(a,b)$ with $0 \leq a \leq b$. 
All $W$ equivalence classes have only one element. The $U$ equivalence classes have $1$
or $2$ elements. The units and the primes with norm $2$ have only one element, also the
real primes have only one equivalence class. The other primes have $2$ elements in each
equivalence class. An example is $4+i$ and $4-i$. 

\paragraph{}  
Hurwitz showed that in $\mathbb{H}$ and $p \neq 2$, there are exactly $24 (p+1)$ primes 
$(a,b,c,d)$ with $a^2+b^2+c^2+d^2=p$. There are therefore $p+1$ classes of primes above
each rational prime $p$. For example, for $p=3$  
we have the $3+1$ primes $(1,1,1,0),(1,1,0,1),(1,0,1,1),(0,1,1,1)$. 
By multiplying with units, we see that they are all not equivalent. 

\section{Prime factorization}

\paragraph{}  
While in $\mathbb{C}$, all factorizations are equivalent, the structure is 
more interesting in the non-commutative quaternion case, where
Conway and Smith finalized in \cite{ConwaySmith} the factorization structure 
of $\mathbb{I}$: it consists of unit migration, recombination and meta commutation.
Recombination means to replace a pair $z \overline{z}$ in the prime factorization
with an equivalent pair $w \overline{w}$ and move it to an other place. 
Lets look next a unit migration: 

\paragraph{} 
Given an integer quaternion $z$, the set of left conjugacy classes $Uz$ is different 
from the set of right equivalence classes $zU$. This means that if $z$ is an integer
and $u$ is a unit, then $u z$ can not be written as $z v$ for a unit $v$ in general. 
What happens is that that the entries get scrambled around. This can be realized using
permutations, respectively using elements of the group action of $U(3)$ on $\mathbb{H}$. 

\begin{lemma}[Unit migration]
For any unit $u$ in $\mathbb{H}$, there exists a permutation $\pi$ of the four 
coordinates $z=(a,b,c,d)$ and an other unit $v$ such that for all $z \in \mathbb{I}$ 
the identity $u z = \pi( z v)$ holds. 
\end{lemma}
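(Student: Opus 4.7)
The plan is to linearize the identity and rewrite it as a statement about the $\mathbb{R}$-linear map $z \mapsto u z v^{-1}$. Since both sides of $u z = \pi(z v)$ are $\mathbb{R}$-linear in $z$ and the integer lattice $\mathbb{I}$ spans $\mathbb{H}$ over $\mathbb{R}$, the identity over $\mathbb{I}$ holds iff the linear maps $L_u$ and $\pi \circ R_v$ agree on all of $\mathbb{H}$. Equivalently, $\pi = L_u \circ R_v^{-1}$, i.e.\ $\pi(w) = u w v^{-1}$ for every $w \in \mathbb{H}$. Thus the task is reduced to producing, for each $u \in U$, a unit $v$ for which the two-sided multiplication $w \mapsto u w v^{-1}$ acts on the basis $\{1,i,j,k\}$ by a (signed) coordinate permutation.

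The natural choice is $v = u$, so that $\pi$ becomes the inner automorphism ${\rm Inn}_u: w \mapsto u w u^{-1}$. This map fixes the real axis ($\pi(1) = 1$) and acts as a rotation on the three-dimensional imaginary subspace spanned by $i,j,k$. The structural input is that $U$ is the binary tetrahedral group, whose quotient $U/\{\pm 1\}$ by its center is the alternating group $A_4$---the rotation group of the regular tetrahedron. Now $A_4$ sits as an index-two subgroup of the octahedral rotation group $S_4$, which in turn is exactly the group of signed coordinate permutations of $\mathbb{R}^3$ with determinant $+1$ (the rotational symmetries of the standard cube with vertices at $(\pm 1,\pm 1,\pm 1)$). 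Hence every ${\rm Inn}_u$ acts on the three imaginary axes as a signed permutation, and extending trivially on the real axis gives a (signed) permutation $\pi$ of the four coordinates $(a,b,c,d)$. With this $\pi$ and $v = u$, one has $\pi(z u) = u(z u) u^{-1} = u z$, which is the claimed identity.

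The main obstacle is to verify that the inner automorphism action of $U$ really lands inside the signed permutation group of the coordinate axes; this relies on the fact that the binary tetrahedral group, embedded in $SU(2)$, stabilizes the standard cube. A direct check on generators confirms it: conjugation by the Lipschitz unit $i$ fixes $i$ and negates $j,k$, while conjugation by the Hurwitz unit $(1+i+j+k)/2$ cyclically permutes $i \to j \to k \to i$. If ``permutation'' is read strictly (no sign flips), one refines the choice to $v = e u$ for a suitable Lipschitz unit $e$; right-multiplication by $e^{-1}$ absorbs the signs produced by the inner automorphism, and a finite case analysis over the $24$ elements of $U$ shows such an $e$ always exists---for instance, $u = i$ paired with $v = k$ yields the pure coordinate permutation $\pi = (1\,3)(2\,4)$.
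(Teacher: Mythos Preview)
Your argument is correct and takes a genuinely different route from the paper. The paper's proof is a bare case-by-case verification: it simply tabulates, for each of the $24$ units $u$, a companion unit $v$ and a permutation $\pi$ making the identity hold. No structural explanation is offered.

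Your approach, by contrast, is conceptual. You recast the identity as the requirement that the $\mathbb{R}$-linear map $w \mapsto u w v^{-1}$ be a coordinate permutation, and you exploit the group-theoretic fact that $U/\{\pm 1\}$ is the tetrahedral rotation group $A_4$, which sits inside the octahedral group of signed coordinate permutations of $\mathbb{R}^3$. This immediately gives, with $v=u$, a signed permutation fixing the real coordinate. The refinement to a \emph{pure} permutation via $v = e u$ with a Lipschitz unit $e$ is also sound: one checks that the four possible sign patterns $(+,+,+,+)$, $(+,+,-,-)$, $(+,-,+,-)$, $(+,-,-,+)$ arising from ${\rm Inn}_u$ are each cancelled by right multiplication by exactly one of $1,j,k,i$ respectively (your example $u=i$, $v=k$, $\pi=(1\,3)(2\,4)$ is the second of these). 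So the residual ``finite case analysis'' you invoke really reduces to four cases, not twenty-four.

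What each approach buys: the paper's table is self-contained and requires no theory, but it gives no insight into why such $(\pi,v)$ exist, and the reader must trust (or recompute) twenty-four identities. Your argument explains the phenomenon---it is a shadow of the inclusion $A_4 \subset S_4$ realized as rotational symmetries of the cube---and makes the lemma feel inevitable rather than accidental. The only cost is that the final step (absorbing signs into a Lipschitz unit) is stated rather than fully written out; making the four-case check explicit would close that small gap.
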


\begin{proof}
Since there are finitely many units, this can be checked case by
case $u z = \pi( v z )$. Here is the explicit list:  \\
\begin{center} 
\begin{tiny}
\begin{tabular}{lll}
unit u             &   unit v           &   permutation $\pi$  \\ \hline
$\{-1,0,0,0\}$       &  $\{-1,0,0,0\}$      &  $\{1,2,3,4\}$   \\
$\{-1,-1,-1,-1\}/2$  &  $\{-1,-1,1,1\}/2$   &  $\{1,4,2,3\}$   \\
$\{-1,-1,-1,1\}/2$   &  $\{0,0,0,-1\}$      &  $\{2,4,3,1\}$   \\
$\{-1,-1,1,-1\}/2$   &  $\{1,1,-1,1\}/2$    &  $\{4,2,1,3\}$   \\
$\{-1,-1,1,1\}/2$    &  $\{-1,1,1,1\}/2$    &  $\{2,3,1,4\}$   \\
$\{-1,1,-1,-1\}/2$   &  $\{0,0,0,1\}$       &  $\{3,1,2,4\}$   \\
$\{-1,1,-1,1\}/2$    &  $\{1,-1,-1,-1\}/2$  &  $\{3,2,4,1\}$   \\
$\{-1,1,1,-1\}/2$    &  $\{1,1,-1,-1\}/2$   &  $\{4,1,3,2\}$   \\
$\{-1,1,1,1\}/2$     &  $\{-1,-1,1,-1\}/2$  &  $\{1,3,4,2\}$   \\
$\{0,-1,0,0\}$       &  $\{1,0,0,0\}$       &  $\{4,3,2,1\}$   \\
$\{0,0,-1,0\}$       &  $\{-1,1,1,-1\}/2$   &  $\{2,1,4,3\}$   \\
$\{0,0,0,-1\}$       &  $\{1,-1,-1,1\}/2$   &  $\{3,4,1,2\}$   \\
$\{0,0,0,1\}$        &  $\{1,-1,-1,1\}/2$   &  $\{3,4,1,2\}$   \\
$\{0,0,1,0\}$        &  $\{-1,1,1,-1\}/2$   &  $\{2,1,4,3\}$   \\
$\{0,1,0,0\}$        &  $\{1,0,0,0\}$       &  $\{4,3,2,1\}$   \\
$\{1,-1,-1,-1\}/2$   &  $\{-1,-1,1,-1\}/2$  &  $\{1,3,4,2\}$   \\
$\{1,-1,-1,1\}/2$    &  $\{1,1,-1,-1\}/2$   &  $\{4,1,3,2\}$   \\
$\{1,-1,1,-1\}/2$    &  $\{1,-1,-1,-1\}/2$  &  $\{3,2,4,1\}$   \\
$\{1,-1,1,1\}/2$     &  $\{0,0,0,1\}$       &  $\{3,1,2,4\}$   \\
$\{1,1,-1,-1\}/2$    &  $\{-1,1,1,1\}/2$    &  $\{2,3,1,4\}$   \\
$\{1,1,-1,1\}/2$     &  $\{1,1,-1,1\}/2$    &  $\{4,2,1,3\}$   \\
$\{1,1,1,-1\}/2$     &  $\{0,0,0,-1\}$      &  $\{2,4,3,1\}$   \\
$\{1,1,1,1\}/2$      &  $\{-1,-1,1,1\}/2$   &  $\{1,4,2,3\}$   \\
$\{1,0,0,0\}$        &  $\{-1,0,0,0\}$      &  $\{1,2,3,4\}$   \\
\end{tabular}
\end{tiny}
\end{center} 
\end{proof}

\paragraph{}  
There are prime pairs which can be disappear and reappear anywhere as they temporarily
become rational integers:

\begin{lemma}[Recombination]
A pair $z \overline{z}$ in the prime factorization can be moved anywhere and 
replaced with an other pair $w \overline{w}$ of the same norm. 
\end{lemma}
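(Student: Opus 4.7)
The plan is to reduce the statement to the observation that $z\overline{z}$ is a \emph{rational} integer and therefore lies in the center of $\mathbb{H}$. Once this is visible, both the mobility and the replacement parts of the lemma become essentially tautological.

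First I would write out $z\overline{z} = N(z)$ using the definition of the arithmetic norm, noting that this quantity is a non-negative rational integer (indeed, a rational prime whenever $z$ is a quaternion prime). Since the center of $\mathbb{H}$ is exactly $\mathbb{R} \subset \mathbb{H}$, the element $z\overline{z}$ commutes with every quaternion. Consequently, in any factorization of the form
\begin{equation*}
  p_1 p_2 \cdots p_k \cdot (z\overline{z}) \cdot q_1 q_2 \cdots q_\ell,
\end{equation*}
the factor $z\overline{z}$ may be slid past any adjacent factor without changing the product. By iterating such swaps, the pair $z\overline{z}$ can be moved to any position in the factorization, which handles the ``moved anywhere'' half of the claim.

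For the replacement half, I would invoke the hypothesis that $w$ has the same norm as $z$, i.e., $N(w) = N(z)$. Then $w\overline{w} = N(w) = N(z) = z\overline{z}$, so as elements of $\mathbb{H}$ the two pairs are literally equal. Substituting $w\overline{w}$ in place of $z\overline{z}$ therefore does not alter the product at all, and the resulting factor $w\overline{w}$ is again a central rational integer, so it too may be migrated to any chosen position by the same commutativity argument.

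The only point that requires any care is the interpretation of ``same norm'': the lemma is vacuous unless we read this as $N(z) = N(w)$ in $\mathbb{Z}$, in which case the conclusion is immediate. There is no real obstacle; the content of the lemma is just that conjugate pairs collapse into the center, and I would state this explicitly rather than hiding it in a calculation.
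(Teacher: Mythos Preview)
Your proposal is correct and follows exactly the same idea as the paper's proof: the paper simply observes that $z\overline{z}$ is a rational prime (when $z$ is a prime) and that rational numbers commute with every quaternion. Your version is a more detailed unpacking of this one-line argument, including the explicit observation that $w\overline{w}=N(w)=N(z)=z\overline{z}$ for the replacement step, but the substance is identical.
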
 
\begin{proof}
A rational number commutes with any number and $z \overline{z}$ is a rational prime
if $z$ is a prime. 
\end{proof} 

\paragraph{} 
As Hurwitz already noticed, primes of norm $2$ play a special role. As in the
complex $\mathbb{C}$ case, all primes over the prime $p=2$ are equivalent. They
especially do not have any charge.
Commuting with them does not change much of the quaternion. It only produces a
rotation by 90 degrees in a two dimensional spacial sub plane (3 cases) or a
rotation by 180 degree reflection. The situation is otherwise similar to the unit 
migration:

\begin{lemma}[Prime 2 migration]
For any prime $p$ of norm $2$, there is an other prime $q$ of norm $2$ and a
rotation $\pi$ in a two-dimensional sub plane such that $p z = \pi (z q)$ for all $z \in \mathbb{I}$.
\end{lemma}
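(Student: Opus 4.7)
The plan is to give a uniform construction rather than a case-by-case table. Take $q := p$ and define
\[
\pi(w) \; := \; \tfrac{1}{N(p)}\, p\, w\, \bar p \; = \; \tfrac{1}{2}\, p\, w\, \bar p .
\]
Because $p\bar p = N(p) = 2$ is a real scalar and therefore commutes with $z$, the required identity is immediate:
\[
\pi(zq) \; = \; \tfrac{1}{2}\, p\,(zp)\,\bar p \; = \; \tfrac{1}{2}\, p z \,(p \bar p) \; = \; p z
\]
for every $z \in \mathbb{I}$ (indeed every $z \in \mathbb{H}$). So the only real content is to check that this $\pi$ is a rotation of $\mathbb{R}^4$ supported on a two-dimensional sub-plane.

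For that, I would appeal to the standard quaternion-to-rotation correspondence already exploited earlier in the paper. Since $p$ has norm $2$, the element $u := p/\sqrt{2}$ lies in the unit sphere $SU(2)$, and $\pi$ coincides with the conjugation $w \mapsto u w \bar{u}$. Writing $u = \cos(\theta/2) + \sin(\theta/2)\,\hat{n}$ with $\hat{n}$ a unit pure-imaginary quaternion, this conjugation is the rotation of the imaginary $\mathbb{R}^3$ by angle $\theta$ about axis $\hat{n}$; lifted to $\mathbb{R}^4$ it fixes the $2$-plane spanned by $1$ and $\hat{n}$ and rotates the perpendicular $2$-plane by the same angle $\theta$. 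This is exactly a rotation in a two-dimensional sub-plane.

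Pinning down the possible angles uses the classification of primes of norm $2$. A direct computation shows that half-integer Hurwitz quaternions have odd norm: if $a_i = m_i + \tfrac{1}{2}$ with $m_i \in \mathbb{Z}$, then $\sum a_i^2 = 1 + \sum m_i(m_i+1)$ is odd. Hence every prime of norm $2$ is a Lipschitz integer whose four coordinates consist of two $\pm 1$'s and two $0$'s, i.e.\ one of the 24 permutations and sign changes of $(1,1,0,0)$. Therefore $\mathrm{Re}(p) \in \{-1, 0, +1\}$, so $\cos(\theta/2) \in \{\pm 1/\sqrt{2},\, 0\}$ and $\theta \in \{\pi/2,\, \pi,\, 3\pi/2\}$. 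This exactly reproduces the author's preceding remark that these $p$-migrations act either by a $90^\circ$ rotation in a spacial sub-plane (the three cases with $\mathrm{Re}(p) = \pm 1$) or by a $180^\circ$ half-turn (the cases with $\mathrm{Re}(p) = 0$).

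The main obstacle I anticipate is interpretive rather than computational: one must read ``rotation in a two-dimensional sub-plane'' broadly enough to admit the $\theta = \pi$ case, where the orthogonal $2$-plane is flipped by a half-turn rather than a quarter-turn. The preceding paragraph in the paper explicitly mentions both possibilities, so this is fine. As a completely safe alternative, one could mimic the proof of the unit-migration lemma and tabulate the $24$ primes of norm~$2$ directly; the conjugation formula above guarantees that such a table exists and gives it in closed form.
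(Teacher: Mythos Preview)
Your proof is correct and takes a genuinely different route from the paper. The paper simply lists six identities (one for each norm-$2$ prime with non-negative coordinates) and lets the reader verify them by direct multiplication. You instead give the closed-form construction $q=p$, $\pi(w)=\tfrac12\,p w\bar p$, reduce to unit-quaternion conjugation via $u=p/\sqrt2$, and then invoke the standard axis--angle description of $w\mapsto uw\bar u$ to see that $\pi$ fixes the $2$-plane $\mathrm{span}(1,\hat n)$ and rotates its orthogonal complement. The angle computation from $\mathrm{Re}(p)\in\{-1,0,1\}$ recovers exactly the $90^\circ$/$180^\circ$ dichotomy the paper mentions just before the lemma.

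What each approach buys: your argument is uniform, explains \emph{why} the rotation is planar, and immediately covers all $24$ primes of norm $2$ rather than six representatives. The paper's tabulation, on the other hand, makes visible an extra feature you leave implicit: each $\pi$ is actually a signed permutation of the coordinates $(a,b,c,d)$, hence preserves both the Lipschitz and Hurwitz lattices. That integrality is automatic from the identity $pz=\pi(zp)$ since both sides lie in $\mathbb{I}$, but if the coordinate-permutation form of $\pi$ matters elsewhere (as it does in the strong/weak equivalence discussion), the explicit table is the faster way to see it.
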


\begin{proof}
Here are the 6 cases \\
\begin{center}
$(1,1,0,0)(a,b,c,d) = (a,b,-d,c)(1,1,0,0)$\\
$(1,0,1,0)(a,b,c,d) = (a,d,c,-b)(1,0,1,0)$\\
$(1,0,0,1)(a,b,c,d) = (a,-c,b,d)(1,0,0,1)$\\
$(0,1,1,0)(a,b,c,d) = (a,c,b,-d)(0,1,1,0)$\\
$(0,1,0,1)(a,b,c,d) = (a,d,-c,b)(0,1,0,1)$\\
$(0,0,1,1)(a,b,c,d) = (a,-b,d,c)(0,0,1,1)$\\
\end{center}
\end{proof}

\paragraph{} 
The fundamental theorem of arithmetic in quaternions requires to 
understand what happens if two odd primes are switched. This process is called
meta commutation:  

\begin{lemma}[Meta commutation]
For any two primes $z,w$ above $p,q$ there is a permutation $\pi$ of the $p+1$ prime classes
above $p$ such that $p q = q \pi(p)$ modulo $U$. 
\end{lemma}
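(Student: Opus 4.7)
The plan is to first establish existence of the factorization $zw = w'z'$ via the matrix-algebra structure of $\mathbb{I}/q\mathbb{I}$, and then verify that the induced map on left-unit classes of primes above $p$ is a permutation by an injectivity argument.

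\emph{Step 1 (existence).} Since $q$ is odd it is unramified in $\mathbb{H}$, so $\mathbb{I}/q\mathbb{I} \cong M_2(\mathbb{F}_q)$. Under this isomorphism $\bar z$ is invertible in $M_2(\mathbb{F}_q)$ (its determinant is $N(z) = p \not\equiv 0 \pmod{q}$), while $\bar w$ is a nonzero rank-one matrix (determinant $q = 0$ in $\mathbb{F}_q$, but $w \notin q\mathbb{I}$). Hence $\overline{zw}$ is rank one, and the right ideal it generates in $M_2(\mathbb{F}_q)$ is a minimal two-dimensional right ideal. Pulling back, $zw\mathbb{I} + q\mathbb{I}$ is a right ideal of $\mathbb{I}$ of index $q^2$, i.e.\ of reduced norm $q$. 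By the standard correspondence between right ideals of reduced norm $q$ and primes above $q$ modulo right units, we get $zw\mathbb{I} + q\mathbb{I} = w'\mathbb{I}$ for a prime $w'$ above $q$, unique up to right units. Since $w'$ left-divides $zw$, we can write $zw = w'z'$, and taking norms gives $N(z') = p$, so $z'$ is a prime above $p$.

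\emph{Step 2 (permutation).} Taking the $p+1$ classes of primes above $p$ to be the left-unit orbits $U\!\cdot\!z$, I define $\pi_w \colon [z] \mapsto [z']$. Well-definedness: if $z_1 = uz$ for a unit $u$, then $z_1 w = (uw')z'$ is a factorization of the required type, and since any two such factorizations of a given element differ only by an internal unit, the second factor is determined up to left-unit multiplication; hence $[z'_1] = [z']$. For injectivity, suppose $\pi_w([z_1]) = \pi_w([z_2])$ with $z_iw = w'_iz'_i$ and $z'_2 = u z'_1$ for some unit $u$. From $z'_1 = \overline{w'_1} z_1 w / q$ and $z_2 w = w'_2 u z'_1$ a direct computation yields $qz_2 = w'_2 u \overline{w'_1} z_1$. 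Reducing modulo $q$ in $M_2(\mathbb{F}_q)$ and using that $\bar z_1$ is invertible gives $w'_2 u \overline{w'_1} \equiv 0 \pmod q$, so $w'_2 u \overline{w'_1} = qB$ for some $B \in \mathbb{I}$. Comparing norms shows $N(B) = 1$, so $B$ is a unit, and substitution gives $z_2 = Bz_1$, hence $[z_1] = [z_2]$. Since $\pi_w$ is an injection of the finite set of $p+1$ classes into itself, it is a permutation.

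The main obstacle is Step 1, where existence of the factorization hinges on recognizing $zw\mathbb{I} + q\mathbb{I}$ as a minimal right ideal, for which the isomorphism $\mathbb{I}/q\mathbb{I} \cong M_2(\mathbb{F}_q)$ (available because $q$ is unramified) is essential. The injectivity argument in Step 2 is then a direct computation in $\mathbb{I}$, relying on $\mathbb{I}$ having no zero divisors (inherited from $\mathbb{H}$) to right-cancel the factor $w$ that produces the key identity.
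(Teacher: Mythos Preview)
Your argument is correct. The paper does not actually prove this lemma; its entire proof reads: ``The integers of norm $1$ and $2$ have already been dealt with. The permutation structure has been studied in \cite{CohnKumar}.'' So you have supplied precisely what the paper outsources to that citation. Your route via the isomorphism $\mathbb{I}/q\mathbb{I}\cong M_2(\mathbb{F}_q)$, identifying $zw\mathbb{I}+q\mathbb{I}$ as a principal right ideal of reduced norm $q$ to obtain the factorization $zw=w'z'$, and then showing injectivity of $[z]\mapsto[z']$ by a norm computation, is essentially the standard argument (and is the one used in the Cohn--Kumar reference). A couple of minor remarks: your well-definedness check in Step~2 is even simpler than you state, since from $z_1=uz$ one gets $z_1w=(uw')z'$ with the \emph{same} second factor $z'$, not merely a left-unit translate; and in Step~1 you implicitly use that every right ideal of $\mathbb{I}$ is principal (Hurwitz's right-Euclidean property), which is worth making explicit since that is what converts the ideal $zw\mathbb{I}+q\mathbb{I}$ into the generator $w'$.
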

\begin{proof}
The integers of norm $1$ and $2$ have already been dealt with. 
The permutation structure has been studied in \cite{CohnKumar}. 
\end{proof}

If we use two $U$-equivalent primes $p,w$ in the factorization, then the Moebius 
function $\mu(z)$ is zero.  \\

\paragraph{} 
Here is {\bf fundamental theorem of arithmetic} for the complete associative
maximal normed division algebra: 

\begin{thm}[Conway-Smith]
Up to unit migration, recombination and meta-commutation, the factorization
of a quaternion integer into primes is unique. 
\end{thm}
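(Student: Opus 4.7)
The plan is to follow the strategy of Hurwitz and Conway-Smith. Two preliminary observations are cheap: by multiplicativity of the arithmetic norm $N$ and unique factorization in $\mathbb{Z}$, any two prime factorizations $z = p_1 \cdots p_n = q_1 \cdots q_m$ of the same integer quaternion have $m = n$ and the same multiset of rational prime norms $\{N(p_i)\} = \{N(q_j)\}$. Conceptually this follows from Jordan--H\"older applied to the composition series of $\mathbb{I}/z\mathbb{I}$ obtained from the chain of right ideals $z \mathbb{I} \subset p_{n-1} p_n \mathbb{I} \subset \cdots \subset p_n \mathbb{I} \subset \mathbb{I}$, whose successive quotients have size $N(p_i)^2$.

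The main technical input is the right-Euclidean property of $\mathbb{I}$: for all $a, b \in \mathbb{I}$ with $b \neq 0$ there exist $q, r \in \mathbb{I}$ with $a = q b + r$ and $N(r) < N(b)$. This reduces to showing that every quaternion lies within arithmetic-norm distance strictly less than one from some Hurwitz integer --- a short covering-radius computation for the Hurwitz lattice that is exactly the reason one cannot work with the Lipschitz sublattice alone. Standard consequences are that every right ideal of $\mathbb{I}$ is principal, and for each rational prime $p \mid N(z)$ the right divisors of $z$ of norm $p$ are determined by $z$ up to a unit factor.

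With these tools in hand I would induct on $n$. Given $z = p_1 \cdots p_n = q_1 \cdots q_n$, I first use \emph{meta-commutation} to rearrange both factorizations so that their rational-prime norms appear in the same order, so that one may assume $N(p_n) = N(q_n)$. If $p_n$ and $q_n$ are associates, write $q_n = p_n u$ for a unit $u$ and apply \emph{unit migration} to push $u$ leftward through $q_{n-1}$ at the cost of a coordinate permutation. Cancelling the common right factor reduces to length $n-1$ and the induction hypothesis applies. If instead $p_n$ and $q_n$ are inequivalent primes above the same rational prime $p$, apply \emph{recombination}: multiplying on the right by $\overline{q_n}$ injects the central scalar $p = q_n \overline{q_n}$ into the word, after which a further recombination replaces the rightmost factor by $p_n$, again shortening the word and returning to the associate sub-case.

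The hardest ingredient is the meta-commutation step, since one must verify that swapping two primes above distinct rational primes $p$ and $q$ really does induce a well-defined permutation of the $p+1$ classes above $p$; this is the content of the Cohn-Kumar analysis and I would invoke it as a black box. The next most delicate piece is the right-Euclidean property itself, which fails on the Lipschitz sublattice and is precisely what forces the passage to the Hurwitz order. Once both are available the three moves suffice at each inductive step to align the rightmost factor and reduce the length, so the remainder of the argument is careful bookkeeping.
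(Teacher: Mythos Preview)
Your approach is the same Conway--Smith strategy the paper invokes; the paper's own proof is essentially a two-line pointer to \cite{ConwaySmith}, so your write-up is in fact more detailed than what appears there. One imprecision worth fixing: the ``standard consequence'' that right divisors of $z$ of norm $p$ are determined up to a unit holds only when the rational prime $p$ does \emph{not} divide $z$ as a quaternion integer; when $p \mid z$ every Hurwitz prime of norm $p$ is a right divisor, and this is exactly the situation your recombination step is designed to handle --- so the induction still goes through, but the lemma as you stated it needs the hypothesis $p \nmid z$.
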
 
\begin{proof}
If the order of the primes is given, we have uniqueness up to unit migration. 
Meta commutation takes care of the rest. See \cite{ConwaySmith}. 
\end{proof} 

\paragraph{} 
How do we represent the $U(3)$ equivalence classes of a prime modulo the group $U$?

\begin{lemma}
Every prime $Q$ in $\mathbb{H}$ is $U$-equivalent to one or two or
three $U(3)$ representatives in the positive quadrant $\mathbb{I}^+$.
\end{lemma}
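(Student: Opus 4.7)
The plan is to reduce the orbit computation to a small finite coset count using the index-$3$ subgroup $L = \{\pm 1, \pm i, \pm j, \pm k\} \subset U$ of Lipschitz units. Because $L$ is the quaternion group $Q_8$ and $U$ is the binary tetrahedral group of order $24$, we have $|U/L| = 3$; fix any set of left coset representatives $\{1, h_1, h_2\}$ with $h_1, h_2$ Hurwitz units.

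The first step is a direct quaternion-arithmetic check that for each $l \in L$ the map $z \mapsto lz$ acts on the real coordinates $(a,b,c,d)$ of $z$ as a \emph{signed permutation}. For example $i \cdot (a,b,c,d) = (-b,a,-d,c)$, $j \cdot (a,b,c,d) = (-c,d,a,-b)$, $k \cdot (a,b,c,d) = (-d,-c,b,a)$, and $-1$ reverses all four signs. The passage preceding Section 3 records that every signed coordinate permutation is realized by the $U(3)$ action on $\mathbb{H}$; it therefore follows that for any integer quaternion $z$, the product $lz$ is $U(3)$-equivalent to $z$, and hence shares with $z$ the unique $U(3)$-canonical representative in the positive quadrant $\mathbb{I}^+$ (namely the one with $0 \le a \le b \le c \le d$).

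The second step is now a short coset count. Any $u \in U$ can be written as $u = lh$ with $l \in L$ and $h \in \{1, h_1, h_2\}$, and then $uQ = l(hQ)$; by the previous step $uQ$ and $hQ$ share a canonical representative. Hence the set of canonical representatives of the $U$-translates $\{uQ : u \in U\}$ is contained in $\{\mathrm{can}(Q),\, \mathrm{can}(h_1 Q),\, \mathrm{can}(h_2 Q)\}$, of size at most three. Taking $u = 1$ shows the size is at least one, completing the count $1, 2$, or $3$.

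The only real obstacle is the first step, i.e.\ recognizing left multiplication by each Lipschitz unit as a signed coordinate permutation; everything else is a formal consequence of $|U/L| = 3$. The lemma does not pin down \emph{which} of the values $1,2,3$ actually occurs for a given prime, but this is clearly determined by whether $\mathrm{can}(h_1 Q)$ and $\mathrm{can}(h_2 Q)$ coincide with each other or with $\mathrm{can}(Q)$, i.e.\ by the symmetry of the multiset of moduli $\{|a|,|b|,|c|,|d|\}$ appearing in each representative. The exceptional collapse to a single representative happens precisely when $N(Q) = 2$, as stated in the introduction, whereas odd primes distribute among the $2$- and $3$-element cases that give the meson/baryon dichotomy.
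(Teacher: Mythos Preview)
Your argument is correct and takes a genuinely different route from the paper's. The paper proceeds by brute force: starting from a prime $z$ in the positive region it writes out all $16$ Hurwitz-unit products $uz$, all $16$ products $zv$, and alludes to the $256$ products $uzv$, then checks coordinate by coordinate which of these can lie in the positive region and tallies up the survivors. You bypass the entire enumeration with a single structural observation: the Lipschitz subgroup $L\cong Q_8$ has index $3$ in $U$ and acts on coordinates by signed permutations, hence lies inside the $U(3)$ action; so the $U(3)$-canonical forms of the left orbit $UQ$ are already exhausted by the three cosets $\{Q,\,h_1Q,\,h_2Q\}$. This makes the bound $3$ transparent (it \emph{is} the index $[U:L]$) and removes all tables.

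One point worth noting for completeness: the paper's case analysis also treats right translates $zv$, and its final equivalence classes $\{z,uz,zv\}$ mix left and right, whereas you handle only the left orbit $\{uQ\}$. If the intended $U$-equivalence is genuinely two-sided, your argument needs one additional sentence. The cleanest patch in your framework: right multiplication by $l\in L$ is again a signed coordinate permutation, and since $L\trianglelefteq U$ one may write any $v\in U$ as $v=h_j l$; thus $\mathrm{can}(zv)=\mathrm{can}(zh_j)$ and the right orbit also produces at most three canonical forms. That these coincide with the left ones follows from the fact that conjugation by any $u\in U$ acts on the imaginary part as an element of the tetrahedral rotation group, hence as a signed permutation of $(b,c,d)$; writing $zh_j=h_j(h_j^{-1}zh_j)$ then shows $\mathrm{can}(zh_j)$ already appears among the left-coset canonicals. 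With that line added, your index-$3$ argument covers everything the paper's explicit lists do, and explains \emph{why} the answer is three rather than merely verifying it.
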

\begin{proof}
Let $Q$ be the region in $\mathbb{H}$, where all coordinates $a,b,c,d$ are nonnegative.
Given a prime $z=(a,b,c,d) \in Q$, we can write down all 
the 24 left conjugates $u z$ and 24 right conjugates $z v$, where $u,v$ run over the 24 units. 
Multiplying with Lipschitz units different from $\pm 1$ gives 
$u z \in \{ (b,-a,d,-c),(c,-d,-a,b),(d,c,-b,-a),(-d,-c,b,a),(-c,d,a,-b),(-b,a,-d,c) \}$ and 
$z u \in \{ (b,-a,-d,c),(c,d,-a,-b),(d,-c,b,-a),(-d,c,-b,a),(-c,-d,a,b),(-b,a,d,-c) \}$ which are not in $Q$
unless two coordinates are zero. But then the conjugate is in the same $U(3)$ representative.
For the $16$ Hurwitz units $u$, the list of $2u z$ is \\
$(-a + b + c + d,-a - b + c - d,-a - b - c + d,-a + b - c - d)$ \\
$(-a + b + c - d,-a - b - c - d,-a + b - c + d,a + b - c - d)$ \\
$(-a + b - c + d,-a - b + c + d,a - b - c + d,-a - b - c - d)$ \\
$(-a + b - c - d,-a - b - c + d,a + b - c + d,a - b - c - d)$ \\
$(-a - b + c + d,a - b + c - d,-a - b - c - d,-a + b + c - d)$ \\
$(-a - b + c - d,a - b - c - d,-a + b - c - d,a + b + c - d)$ \\
$(-a - b - c + d,a - b + c + d,a - b - c - d,-a - b + c - d)$ \\
$(-a - b - c - d,a - b - c + d,a + b - c - d,a - b + c - d)$ \\
$(a + b + c + d,-a + b + c - d,-a - b + c + d,-a + b - c + d)$ \\
$(a + b + c - d,-a + b - c - d,-a + b + c + d,a + b - c + d)$ \\
$(a + b - c + d,-a + b + c + d,a - b + c + d,-a - b - c + d)$ \\
$(a + b - c - d,-a + b - c + d,a + b + c + d,a - b - c + d)$ \\
$(a - b + c + d,a + b + c - d,-a - b + c - d,-a + b + c + d)$ \\
$(a - b + c - d,a + b - c - d,-a + b + c - d,a + b + c + d)$ \\
$(a - b - c + d,a + b + c + d,a - b + c - d,-a - b + c + d)$ \\
$(a - b - c - d,a + b - c + d,a + b + c - d,a - b + c + d)$. \\
In order that we are in $Q$, we need to satisfy in each case 4 inequalities
simultaneously. Its possible for example in the last entry if $a>b+c+d$ or
in the entry $(a + b + c - d,-a + b - c - d,-a + b + c + d,a + b - c + d)$ if
$b>a+c+d$. The list of $2z v$ is \\
$(-a + b + c + d,-a - b - c + d,-a + b - c - d,-a - b + c - d)$ \\
$(-a + b + c - d,-a - b + c + d,-a - b - c - d,a - b + c - d)$ \\
$(-a + b - c + d,-a - b - c - d,a + b - c - d,-a + b + c - d)$ \\
$(-a + b - c - d,-a - b + c - d,a - b - c - d,a + b + c - d)$ \\
$(-a - b + c + d,a - b - c + d,-a + b - c + d,-a - b - c - d)$ \\
$(-a - b + c - d,a - b + c + d,-a - b - c + d,a - b - c - d)$ \\
$(-a - b - c + d,a - b - c - d,a + b - c + d,-a + b - c - d)$ \\
$(-a - b - c - d,a - b + c - d,a - b - c + d,a + b - c - d)$ \\
$(a + b + c + d,-a + b - c + d,-a + b + c - d,-a - b + c + d)$ \\
$(a + b + c - d,-a + b + c + d,-a - b + c - d,a - b + c + d)$ \\
$(a + b - c + d,-a + b - c - d,a + b + c - d,-a + b + c + d)$ \\
$(a + b - c - d,-a + b + c - d,a - b + c - d,a + b + c + d)$ \\
$(a - b + c + d,a + b - c + d,-a + b + c + d,-a - b - c + d)$ \\
$(a - b + c - d,a + b + c + d,-a - b + c + d,a - b - c + d)$ \\
$(a - b - c + d,a + b - c - d,a + b + c + d,-a + b - c + d)$ \\
$(a - b - c - d,a + b + c - d,a - b + c + d,a + b - c + d)$. There are again conditions
for which this works like in the last case, if $a$ dominates. Depending on whether we
are in a class already covered in $2uz$, we have now equivalence classes
with 2 entries $\{z,uz\}$ or equivalence classes with three entries $\{z,uz,zv\}$. 
There are no other cases as from the list of 256 entries $u z v$ (a list we omit),
we are either equivalent to $z$ or not in $Q$. 
\end{proof}

In the meson case, one of the entries is always zero. 
If two entries are zero like $(a,0,c,0)$, the partner must be conjugated by a Hurwitz
unit and is then of the form $(a',a',c',c')$. 

\section{Charge functional}

\paragraph{} 
On $\mathbb{Z}$ where the set of rational primes $\{2,3,5 \dots \}$ are the fundamental building
blocks, there is a natural multiplicative charge, the M\"obius function $\mu(x)$.
For complex primes, we can define a charge by looking at the sign of the
imaginary part when the argument is gauged to be in $(-\pi/4,\pi/4)$. The prime $2+i$ for example
has charge $1$ while the prime $2-i$ has charge $-1$.  The prime $1+i$ has zero charge as $1+i$
is conjugated to $1-i$. For integers in $\mathbb{C}$ it an be enhanced by defining $c(x)$ as 
$\mu(N(x)) (-1)^{n(x)}$, where $N(x)$ is the arithmetic norm and 
$n(x)$ is the number of negatively charged primes in 
the factorization. Also this is a multiplicative function.

\paragraph{} 
We can also define a charge function on the set $\mathbb{I}^*$ of non-zero quaternion integers
which are square free. We only take square free integers because quarks are fermions. Here is a
preliminary definition: lets assume that {\bf charge} is a function $c: \mathbb{I}^* \to \mathbb{R}$ satisfying:
   a) $c(z w) = c(z) +c(w)$ for non-units,
   b) $c(z) = -c(\overline{z})$,
   c) $c(z) = c(z')$ if $z' = \pi(z)$ with $\pi \in V$,
   d) primes which are $W$-equivalent but not $V$-equivalent have opposite charge,
   e) If a Lipschitz prime is $U$-equivalent to a Hurwitz prime, the absolute charge of the 
      Lipschitz prime is not smaller than the Hurwitz prime,
   f) The sum of the charges of elements in a $U$-equivalence class is an integer,
   g) $c$ is positive on the positive quadrant, and 
   h) $c$ is the smallest function of this kind. \\

\paragraph{} 
A charge for units could be defined by $c(w)-c(z)$ if $w=uz$, but this needs independence of $w,z$. 
A direct definition of charge for units does not work as $u^3=1$ and $c(1)=0$ 
would imply $c(u)=0$. For example, the unit $a=(1,1,1,1)/2$ has the property $a^3=-1$ and 
$-1$ has charge $0$. If the multiplicative nature would be extended to this, 
the unit would have charge $0$.

\paragraph{} 
There is only one charge function:
existence is shown by construction. Because it takes the value $1$ for some 
equivalence classes, there can not be any smaller one. 
To get uniqueness, we can restrict to primes by property a). 
For the equivalence classes with one element, we have either units or prime $2$.
Since conjugation leaves a prime $2$ invariant, this must have charge $0$.
Lets look at the equivalence classes with one Lipschitz and one Hurwitz prime. 
If the two have opposite charge, then we get a $2/3,-2/3$ or $1/3,-1/3$ 
pair. If not, we get a $2/3,1/3$ pair. 
In an equivalence class with one Lipschitz and two Hurwitz primes, we have
either the Lipschitz one equivalent to one of the Hurwitz, which gives $2/3,2/3,-1/3$
or then the two Hurwitz equivalent, which gives $2/3,-1/3,-1/3$. 

\paragraph{}  
Now we an extend the charge function $C$ from primes to equivalence classes of primes. 
This can be done by adding up the charges of the elements of the equivalence class
$C(z) = \sum_{w \sim z} c(w)$.
This is motivated as follows. We look at a factorization into equivalence classes. 
Then we sum over all possible paths of factorizations,
we get for each Hadron an integer charge, even so the charge of 
an individual quark is a fraction, a probability of hitting that particular quark. 
By definition, this is now an integer-valued function on equivalence classes of primes. \\

\paragraph{} 
An other natural functional, additive with respect to the multiplication is the logarithm
of the norm. We could abstractly define {\bf mass} as a function $m: \mathbb{I} \to \mathbb{R}$ satisfying:
   a) $m(z w) = m(z) + m(w)$,
   b) $m(z) \geq 0$,
   c) $m(z)=0$ if and only if $z$ is a unit,
   d) $e^{m(z)}$ takes integer values on integers,
   e) $m$ is the smallest function of this kind. 
and show that there is only one mass function on $\mathbb{I}^*$, the function $m(z)=\log(N(z))$. Whether
``mass" is a good name for such a functional will become only clear when looking at solutions of wave
equations. 

\section{Leptons}

\paragraph{}  
We think of a square-free Gaussian integer as a collection of {\bf leptons}, where the individual
Gaussian prime equivalence classes are indecomposable {\bf Fermions}. A prime $z$ of type $4k+1$ together with
an opposite charged particle $\overline{z}$ forms an {\bf electron-positron} pair. We gauge the integers
with units $U=\{1,i,-1,-i\}$ to be in the sector $\pi/2<{\rm arg}(z) \leq \pi/2$. The {\bf charge}
of a lepton $z$ is then defined as the sign of the argument of $z$ in the branch $(-\pi,\pi)$.
We think of the logarithm of the norm $N(z)$ or $|z|$ as {\bf mass}, whatever is prime. 
A rational positive prime $4k+3$ plays the role of a {\bf neutrino}.
It is lighter because its momentum $|z|$ is prime while for $4k+1$ primes which are electrons or
positrons, the energy $N(z)$ is prime. A neutrino is {\bf neutral} as it is located 
on the real axes. The largest known prime for example is a Mersenne prime and so a neutrino.

\paragraph{}
An integer $n=p_1 \dots p_k$ is a {\bf lepton configuration}. The fact that the Gaussian primes 
form a {\bf unique factorization domain} translates into the statement that any lepton configuration 
can be decomposed uniquely into such leptons as well as a bunch of neutral mystery particle $2$ which is its
own anti-particle. The uniqueness holds only modulo {\bf gauge transformations} which act here as
multiplications by units. We will see that in the Quaternion case, this fact is no more the case, 
because that is, where the Hadrons will come in explaining why quarks form baryons and mesons.
The electron-positron pair is {\bf not bound} together: there is no unit which maps one into the other.
Factoring out the symmetry of units renders the factorization unique. The product $(-3) (-7)$ for
example is gauge equivalent to the product $3 \cdot 7$. Let us now move from primes to rational integers and
call a {\bf rational integer} $n \in \mathbb{Z}$ a {\bf Boson configuration} if it contains an even
number of Fermionic prime factors counted with multiplicity, otherwise it is a Fermion. 

\paragraph{}
Mathematically one could declare a natural number 
$n \in \mathbb{N}$ to be a Fermion if its {\bf Jacobi symbol} 
$(-1|n)=\left( \frac{-1}{n} \right)$ is $-1$.
Otherwise, if $(-1|p)=1$, it is a Boson. The Gauss law of {\bf quadratic reciprocity} 
tells now that two odd primes $p,q$ satisfy the {\bf commutation relations}
$(p|q) =(q|p)$ if at least one of them is a Boson and that the
{\bf anti-commutation relation} $(p|q)=-(q|p)$ hold exactly
if both primes $p,q$ are Fermions. In other words, if we look at the 
Jacobi symbol as an operator $p \cdot q$, then Bosons commute with everything else, but the sign changes,
if we switch two Fermions.

\paragraph{}
The {\bf two square theorem of Fermat} telling that an integer $n$ can be represented as $a^2+ b^2$
if and only if $n$ is a Bosonic integer can be interpreted as the fact that a Bosonic rational prime
is actually composed of two leptons $a+ib,a-ib$, where $a^2+b^2=p$. The positron and electron are anti-particles
of each other, but they are not equivalent, since there is no gauge from one to the other. If we factor out the
{\bf gauge symmetries} given by the {\bf units}, then the factorization aka particle decomposition of the
lepton set is unique. This is the
{\bf fundamental theorem of arithmetic} for Gaussian integers. It can be proven from the rational
case using the $1-1$ identification of $\mathbb{C}/D_4$ with the set of rational primes. 

\paragraph{}
The {\bf Pauli exclusion principle} is 
encoded in the form of the {\bf Moebius function} $\mu_G(n)$ which is equal to $1$ if a Gaussian integer 
$n$ is the product of an even number of different Gaussian primes, and $-1$ if it is the product 
of an odd number of different Gaussian primes and $0$, if it contains two identical particles. 
Again this {\bf particle allegory} is already useful as a {\bf mnemonic} to
remember theorems like the two square theorem, or the quadratic reciprocity theorem:
"Quadratic Reciprocity means that only Fermion primes anti-commute $(p|q)=-(q|p)$.
Fermat's two square theorem assures that Bosonic rational primes $p=a^2+b^2$
are composed of two Gaussian primes $a \pm i b$. The others are all real, light and neutral." 

\section{Hadrons} 

\paragraph{}
Hadrons are quaternion prime equivalence classes $z$ with norm $N(z)$ different from $2$. The prime $2$ is special 
also in quaternions: as Hurwitz already showed one can despite non-commutativity place the factors $2$
outside: every integer quaternion $z$ is of the form $z=(1+i)^r b$ where $w$ is an odd 
integer quaternion. There are two symmetry groups acting on hadrons. 
One is the group $U$ of units, the other is the group $V$ generated by coordinate permutations and conjugation. 
The group $U$ has 24, the group $V$ has $48$ elements. The groups are no more
contained in each other like in the complex case, where $U=\mathbb{Z}_4$ was a subgroup of $V=D_4$.

\paragraph{}
If we look at the $U$-equivalence classes first and then look at the orbits of $V$, we see that some
particles are fixed under $V$ or then that they come in pairs. We will interpret this as {\bf particle
anti-particle pairs}. We can however also look at the $V$ equivalence classes first and then 
look at the orbits of $U$, then we see 1 or 2 or 3 particles combined. This is remarkable. Lets repeat
the statement: the $SU(3)$ equivalence classes of primes of positive charge are given by elements for 
which all coordinates are non-negative. If we call two such particles $z,w \in SU(2)$ equivalent if there exists
a unit in $\mathbb{H}$ such that $z = u w$, then there are three type of equivalence classes: if $N(z)=2$,
then there is one equivalence class. If $z$ is a unramified prime, then there are two cases: either there
are two elements in each $U$ equivalence class or three elements in each $U$-equivalence class. 
So, it is important first to look at the ``strong equivalence classes", then classify them according to 
``weak equivalence". If we would look at ``weak equivalence classes" first, then classify according to 
``strong equivalence", then we would get in the unramified case only two types. They tell the sign of the charge. 

\paragraph{} 
A {\bf hadron} is a quaternion prime equivalence class which contains more than one prime.
As a matter of fact, each hadron to an odd prime $p$ consists either of two or three 
{\bf quarks}. Quarks can be either Lipschitz or Hurwitz primes. 
The {\bf Lagrange four square theorem} assures that there are no neutrini type particles among Hadrons:
particles for which three coordinates are zero. 
So, we can think of odd quaternion primes as {\bf quarks} which form equivalence classes in the form of 
baryons or mesons. The conjugate quaternion is the {\bf anti particle}. 
This is a  probabilistic picture: take an integer $n \in \mathbb{I} \subset \mathbb{H}$. 
Look at all possible prime factorizations of $n$. We can look at such a factorization also 
modulo units meaning to look at equivalence classes. 
If we fix the order of the equivalence classes, then we see an ordered sequence
of mesons (equivalence classes with two primes), baryons (equivalence classes with three primes)
and bosons related to $2$ (equivalence classes with one prime). The latest primes can be moved anywhere without
producing other changes  as switching it with an other prime only produces a slight gauge transformation. 
Switching the order of two other classes however can change the nature of the particles. 

\paragraph{}
As can be seen from 
the Hurwitz factorization theorem which shows that in such a factorization, one can avoid have
factors $z \overline{z}$ near each other. The allegory is that they would ``annihilate" into energy
and re-emerge as particles. 
Going from one factorization to an other is a rather complex interaction process
changing the nature of some baryons involving gauge bosons. 
It does certainly not match with all particle processes as the number of primes in a factorization 
of $z$ is constant among all factorizations. This could just be the limit of particle phenomenology
alone. The true nature of physics only comes with dynamics. 

\paragraph{}
Baryons are Fermions and mesons are Bosons. Like in the complex case, we have 
a mystery $p=2$ case, which has only one equivalence class. Lets call it the {\bf $2$ particle}
 even so we would like to associate it with something real. Closest to mind comes ``Higgs" which 
is neutral, light, its own anti-particle and can gives more mass to other particles.
Here it does so by multiplying with an other integer. 

\paragraph{}
The {\bf group of units} contains $8$ particles of the form 
$(\pm 1,0,0,0)$, $(0,\pm 1,0,0)$, $(0,0,\pm 1,0)$,$(0,0,0,\pm 1)$ which modulo $U$ 
are all equivalent to the neutral $(1,0,0,0)$, the $Z$-Boson. 
There are $16$ remaining units. Modulo $V$ they are all equivalent to $(1,1,1,1)/2$. This has
a positive charge and is the {\bf $W^+$ boson}. Its conjugate is the {\bf $W^-$ boson}.

\paragraph{}
Now lets look at a meson $(pq)$ containing a Lipschitz prime $p$ and Hurwitz prime $q$. 
Since $p$ and $q$ are equivalent in $U$, there exists a permutation, possibly with a
conjugation, such that $\overline{p}$ is gauge equivalent to $q$. If a conjugation is 
involved, then $p$ and $q$ have the same sign of charge, otherwise opposite.
Lets postulate that the {\bf charge} of a Lipschitz quark is $2/3$. 
Since we have identified modulo $V$ we can assume that it is positive. 
In the meson case, the charge of the other particle is $1/3$ if it has the same charge 
and $-2/3$ if it has opposite charge. 

\paragraph{}
In the baryon case, if the two other quarks have the same charge sign, they have charge $-1/3$.
If one has the same charge than the Lipschitz one, then both have charge $2/3$ and the other $-1/3$. 
The charge of an equivalence class is the sum of the charges. We have now assigned a charge in a gauge 
invariant way: a Lipschitz quark $(a,b,c,d)$ has charge $+2/3$ if $a \leq b \leq c \leq d$ and $-2/3$
if it is obtained from that by switching two coordinates. The structure of the equivalence classes 
assures a compatible choice so that the total charge is an integer. 

\paragraph{}
In the meson case, we observe that one of the Lipschitz primes is 
located on the three or two dimensional coordinate plane. If a coordinate is zero, then we can
not perform all flipping operations and we have to see whether we have to flip it in the {\bf Gaussian 
sub-plane}. In this picture, all Hadrons have charge $0,-1,1$. There are no Hadrons of charge $2$. 
Particle physicists mention {\bf beta $uuu$-hadrons} detected in experiments. 
Also exotic 4 and 5 quark matter consisting of more than 3 quarks has been detected. 
It is likely just an analogue of 6 quark
state which is known under the name ``Deuterium". But the quarks are the not  
``in one bag". Its more like that a neutron and proton glued together by strong forces. 

\paragraph{}
The non-uniqueness of prime factorization allows to see the transition from one to an 
other factorization as a {\bf particle process}. It involves the gauge bosons. 
It only becomes only unique modulo {\bf unit migration} and recombination.
(see \cite{ConwaySmith}). This means that if $x$ is a quaternion integer and $N(x)=p_1 \dots p_n$
then $x= (P_1 u_1) (u_1^{-1} P_2 u_2)  \dots  (u_{n-1}^{-1} P_n)$, where the $u_j$ are units
and $P_i$ are Hurwitz primes. In other words, the factorization becomes unique if
we look at it on the meson/baryon scale but it depends on the order. It becomes unique
when including meta-commutation: the prime factorization of a nonzero Hurwitz integer
is unique up to meta-commutation, unit migration and recombination, the process
of replacing $P \overline{P}$ with $Q \overline{Q}$ if $P,Q$ have the same norm.  \\

\paragraph{}
The $p=2$ is special $(1,1,0,0)$ and not included in the above remarks. It is neutral
and equivalent to its anti-particle. It is "Higgs like", as in the lepton case and
we can not place it yet. Also, we like to think about a situation, where space is actually 
a dyadic group of integers (a much more natural space as it is compact, features a smallest
translation), here scaling by a factor $2$ is a symmetry and multiplication by $2$ 
makes the grid finer (making the $2$-adic norm small). The mechanism of mass can anyway
only be understood when looking at dynamical setups, where particles travel. When looking
at wave equations in dyadic groups, the multiplication by 2 plays a special role and it can 
slow down particles, similar as mass does. \\

\paragraph{}
Here are some examples.
For $p=3$, $(0,1,1,1)$, $(1,1,1,3)/2$ form a meson of charge $1=2/3+1/3$.
For $p=5$, we have a meson $(0,0,1,2),(1,1,3,3)/2$ of charge $1$. 
For $p=13$, we have a baryon $(1,1,1,7)/2$, $(1,2,2,2)$, $(3,3,3,5)/2$ of charge $0$ 
and a meson $(0,0,2,3)$, $(1,1,5,5)/2$ of charge $1$.
For $p=41$ we have a meson $(0,3,4,4)$, $(3,3,5,11)/2$ of charge $\pm 1$. 
and a meson $(0,1,2,6)$, $(3,5,7,9)/2)$ for which all coordinates are different. 

\paragraph{}
The integer units $i,j,k$ are the $Z$ vector bosons while the $(1+i+j+k)/2$ etc are $W^{\pm}$ 
vector bosons. In the complex the $1,i$ generate photons. 
We can not have Lipschitz primes $(a,a,b,b)$ as this is $2a^2+2b^2 = 2(a^2+b^2)$ which is 
$0,2$ modulo $4$.  You see in the figures some pictures. We are able to attach to any 
prime a collection of baryons, for which the charges is determined.

\paragraph{}
A square free integer is a collection of Fermi particles. A given factorization is a particular particle
configuration. Going from one factorization to an other is an interaction process which 
invokes exchange of vector bosons and where particles change nature. 
Say, if $z$ is a baryon and $w$ a meson, then $z'$ and $w'$ can represent a meson-baryon
pair.

\paragraph{}
Gluons come from more general symmetry transformations. They do not have 
mass similarly like photons do not have mass. Lets postulate that Gluons are obtained
from symmetry transformations when $SU(3)$ acts as a symmetry in
$\mathbb{H}$. This is motivated from the complex place where $(a,b) \to (-b,a)$, the
multiplication with i, the square root of -1, is realized as an element in the dihedral
group, actually the cyclic group U. We can associate a gluon with a transformation 
$(a,b,c,d) - (a',b',c',d')$ which is a subgroup of $SU(3)$ The vector bosons $W^0,W^{pm}$
are implemented by multiplications by a unit in $SU(2)$. 

\section{Remarks and questions}

\paragraph{Related ideas}
In 1952, Paul Kustaanheimo \cite{Kustaanheimo} explored the idea to do physics over
a finite field $K$ rather than over the real or complex numbers.
He mused about the size of the size of the prime. The situation is different however in that 
the approach is to take the integers in $\mathbb{H}$ are the central point. 
We also should mention that particle phenomenology in division algebras has been worked on 
before, like in \cite{DixonDivisionalgebras}. The octonions could play an important role
too but their combinatorics is more complicated as there are three type of primes: 
{\bf Gravesian integers} $(a,b,c,d,e,f,g)$ which play the role of the
Lipschitz primes in the Hurwitz case, the
{\bf Kleinian integers} $(a,b,c,d,e,f,g) + (1,1,1,1,1,1,1,1)/2$ which play the role of the 
Hurwitz primes in $\mathbb{H}$. Then there are the 
{\bf Kirmse integers} which includes elements for which $4$ of the entries are half integers.
Eight {\bf maximal orders} have been identified by Kirmse \cite{Kirmse25}, a miscount 
which was later corrected by Coxeter to seven \cite{Coxeter46}). \\

\paragraph{Quaternion valued waves}
We have seen that any $\mathbb{C}$-valued quantum evolution extends to an
evolution of quaternion valued fields when seeing the evolution on $\mathbb{C}^3$ 
three complex planes but then merge things in a common real line. 
It would be useful to explore this more systematically. 
The physics in Division algebras is any new \cite{Baez2012,DixonDivisionalgebras}. 
One certainly also has to consider the possibility of octonion valued fields. 
We currently feel that the non-associativity
in $\mathbb{O}$ makes that algebra more difficult to use in a quantum mechanics 
frame work similarly as number theory is more difficult. 
But again, the Frobenius-Hurwitz theorem makes division algebras so
special that it is only natural to consider them. Particle physics today is in an interesting
stage, as some urgent questions are open. Are there more than three different flavors, are
there larger symmetries like $SU(5)$ or will completely different approaches
prevail and replace the standard model? The structure outlined here hints that the answer
could be ``no" in all three cases. 

\paragraph{Stability of neutrini}
If $a+ib$ or $a+jb$ or $a+kb$ is a Gaussian prime for which $a^2+b^2$ is prime then
it is also a Lipschitz prime in $\mathbb{H}$. While $\mathbb{C}$ contains also
``light primes" like $z=3$, primes for which $\sqrt{N(z)}$ is prime,
within $\mathbb{H}$: the initial inert prime $p=3$ would decay
$3 = (3,0,0,0) = (1,-1,0,-1) (1,1,0,1)$. The particle $(1,1,0,1)$ is equivalent to $(1,1,3,1)/2$
so that it is a meson. Why does $(3,0,0,0)$ not decay but
the prime $5=(1,2,0,0) (1,-2,0,0)$ does decay into an electron-positron pair? One possibility
to explain this is that $(3,0,0,0)$ is a different particle, evolving in a complex plane
rather than the meson pair $(1,-1,0,-1),(1,1,0,1)$ which evolves in the full algebra $\mathbb{H}$. 
However, here the stretch of the analogy becomes apparent already. 

\paragraph{Tetra and Penta quarks}
{\bf Tetra quarks} and {\bf Penta quark matter} seem have 
been confirmed by now. Examples of such events were X(3872), X(4140). 
In one interpretation, these states of matter so special and are just pairs of mesons
similarly as the {\bf deuterium nucleus} is a familiar 6 quark state and the tritium nucleus 
a $9$ quark state. Four and five quark states exotic because they are unusual 
mesons configurations which seem just be more rare. It is not necessary that a 4 
quark state contains all the quarks together in the same way as in say the proton or
neutron case. A deuterium is a 6 quark state, but the quarks are not glued as fundamentally
as in a baryon. A baryon can only be separated by combining it with other
particles, while a Deuterium boson can be split into a neutron and proton, which are both
Fermions.

\paragraph{Number of electrons and neutrinos}
We know that there are about an equal amount of $(4k+1)$- than
$(4k-1)$-primes, with a slight {\bf Chebyshev bias} for smaller numbers. But we estimate about 1 billion
times more neutrini than electron-positron pairs in the universe. How come?
Let us assume the number of particles in the universe is finite. As there are infinitely many primes
we have to assume we impose some {\bf Kustaanheimo threshold} \cite{Kustaanheimo} number and look only at
primes below that number and take this as a measure for the number of particles.
If we count also the {\bf virtual electron-positron pairs} which quantum field theory predicts
and leads to manifestation like the {\bf Casimir effect}.
Using virtual particles one could imagine the two particle classes to have similar amount of elements.

\paragraph{Light neutrini}
We know that neutrini are much lighter than electrons. The neutrino mass is about $0.33$ eV
which is about $5 \cdot 10^{-37} kg$. The mass of an electron is about $10^-{30} kg$. 
Current physics does not even tell how neutrinos acquire mass. The current understanding is that 
is that it does not happen through the Higgs mechanism. That would be compatible with the fact 
that it is simply related to the amplitude of the wave. Primes with larger norm have more mass. 
This is not unnatural since this is a dynamical mass interpretation. In number theory, there is a
difference in the amplitudes: the real inert Gaussian primes $z$ have prime $\sqrt{N(z)}$ while the 
split Gaussian primes $z$ have prime $N(z)$. For example $z=3$ is a inert prime wile $2+i$ is a
split prime. If we look at $N(z)$ as a measure for mass, then $N(z)$ is not prime for a real Gaussian
prime. It is the square root of $N(z)$ which is prime. If there should be any relation with $N(z)$
and mass and the mass relation measured is considered, 
this suggests that $\sqrt{N(z)} \sim 10^7$ for primes modeling particles. By Hurwitz,
on a sphere $N(z)=10^{14}$ this $N(z)+1$ is also the number of different prime equivalence classes 
on such a sphere. There is no relation with \cite{Kustaanheimo} since the later takes the
more radical approach replacing the field $\mathbb{C}$ with a finite field. 

\paragraph{The unramified primes}
The prime $p=2$ is special. When looking for analogies, it comes closest to the Higgs particle.
The reason is that this particle is conjugate to its own anti-particle and therefore has no charge.
The Higgs is a Boson unlike the neutral neutrini. What is the difference? 
One could try the picture that $2$ is actually 
equivalent to a pair of Fermions $z \overline{z}$ of norm $2$, where $z$ and $\overline{z}$ are 
conjugated. Ramified as such, the two 
Fermions $z,\overline{z}$ are identified so that it becomes elementary and a Boson. For the other 
primes like $3$, there is no such factorization. For $5$ the factorization $5=(2+i)(2-i)$ splits into two
charged particles which are not equivalent. 
The commutation statistics for $p=2$ is described by the {\bf second supplement of quadratic reciprocity}:
$(2|p) = (p|2)$ for primes of the form $8k \pm 1$ and $(2|p) = -(p|2)$ for the rest of odd primes.
We can remember this as follows: while neutrini do not interact with electrons nor positrons,
the prime $2$ interacts with both types.

\paragraph{What are the photons in this allegory?}
Photons are the units in $\mathbb{C}$. They play the role
of gauge transformations. The photons $i,-i$ satisfy $i^2=-1$ and interact with electrons or
positrons only and not with neutrini. Indeed, by the first supplement of quadratic reciprocity, 
we have $(-1|p)=1$ for electron for positrons primes but $(-1|p)=-1$ for neutrini primes: 
there is no solution to $x^2=i^2$ modulo $p$ if $p$ is a neutrino prime but there is a solution 
$x$ to $x^2=i^2$ modulo $p$ if $p$ is a charged lepton. 
All photons are gauge equivalent so that there is only one kind of a photon. 
A photon has zero charge as it is equivalent to the real element $1$. 

\paragraph{Seduced by analogies}
To the end, one has to caution: 
tales like Eddington explaining the {\bf fine structure
constant} by numerology or Kepler obsessed with his {\bf Harmonices Mundi} matching planetary motion
with Platonic solids or the Thomson-Tait approach
to explain the {\bf chemical elements} using knots, show that analogies do not 
always go well, even if done by eminent scientists.
As an {\bf amusing illustration}, in some online forum, a user discovered that
the natural numbers $\mathbb{N}$, the complex numbers $\mathbb{C}$, the quaternions $\mathbb{H}$ and the 
octonions $\mathbb{O}$ are curiously linked to the basic constituents of {\bf organic matter} like nitrogen 
$N$, carbon $C$, hydrogen $H$ and oxygen $O$. This silly coincidence illustrates how scientific brains like 
to build connections and if necessary, bend things, if a match is not complete. While the allegory 
spun here could be of similar type, there is interesting combinatorics
to be explored, as the figures below illustrate. How many different baryons are there for
a given prime? How many are there of each type? Is there a more natural way to associate charge to the
quarks by looking at all possible factorizations.
What exactly happens to the equivalence classes during meta commutation?

\section{Figures}

\begin{figure}
\scalebox{0.80}{\includegraphics{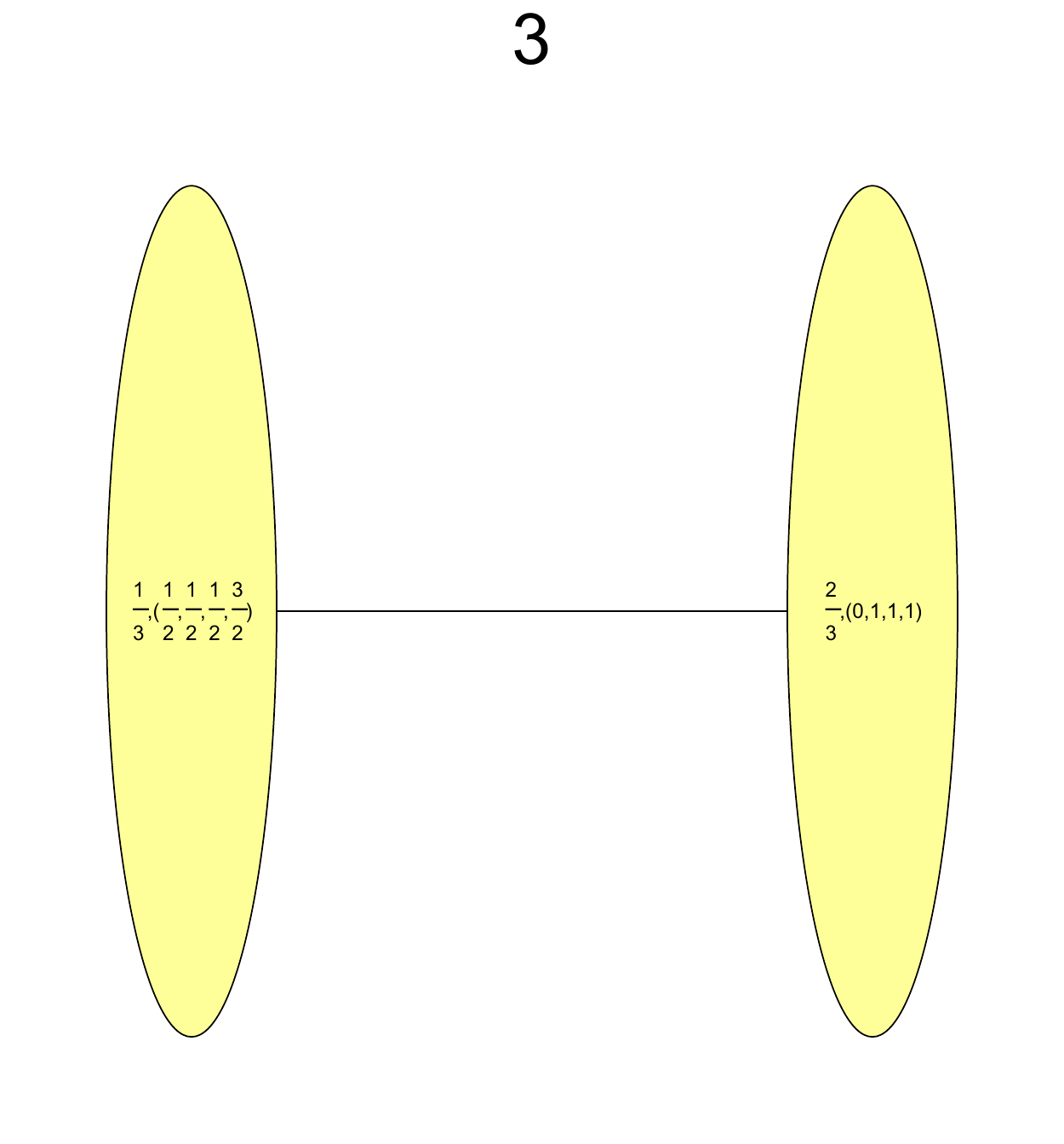}}
\scalebox{0.80}{\includegraphics{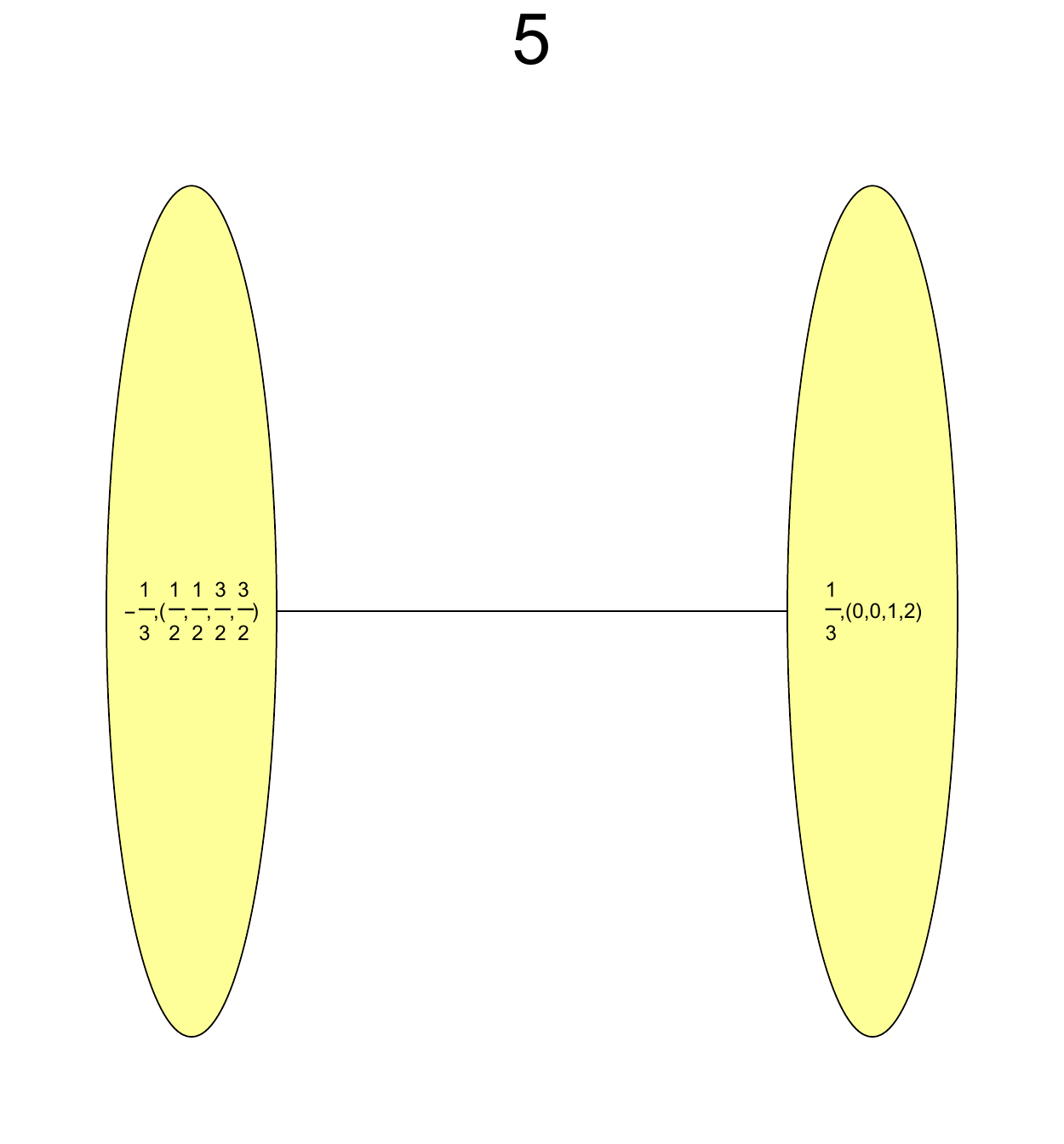}}
\caption{ \label{quarks} Hadrons with charges.  } \end{figure} 

\begin{figure}
\scalebox{0.80}{\includegraphics{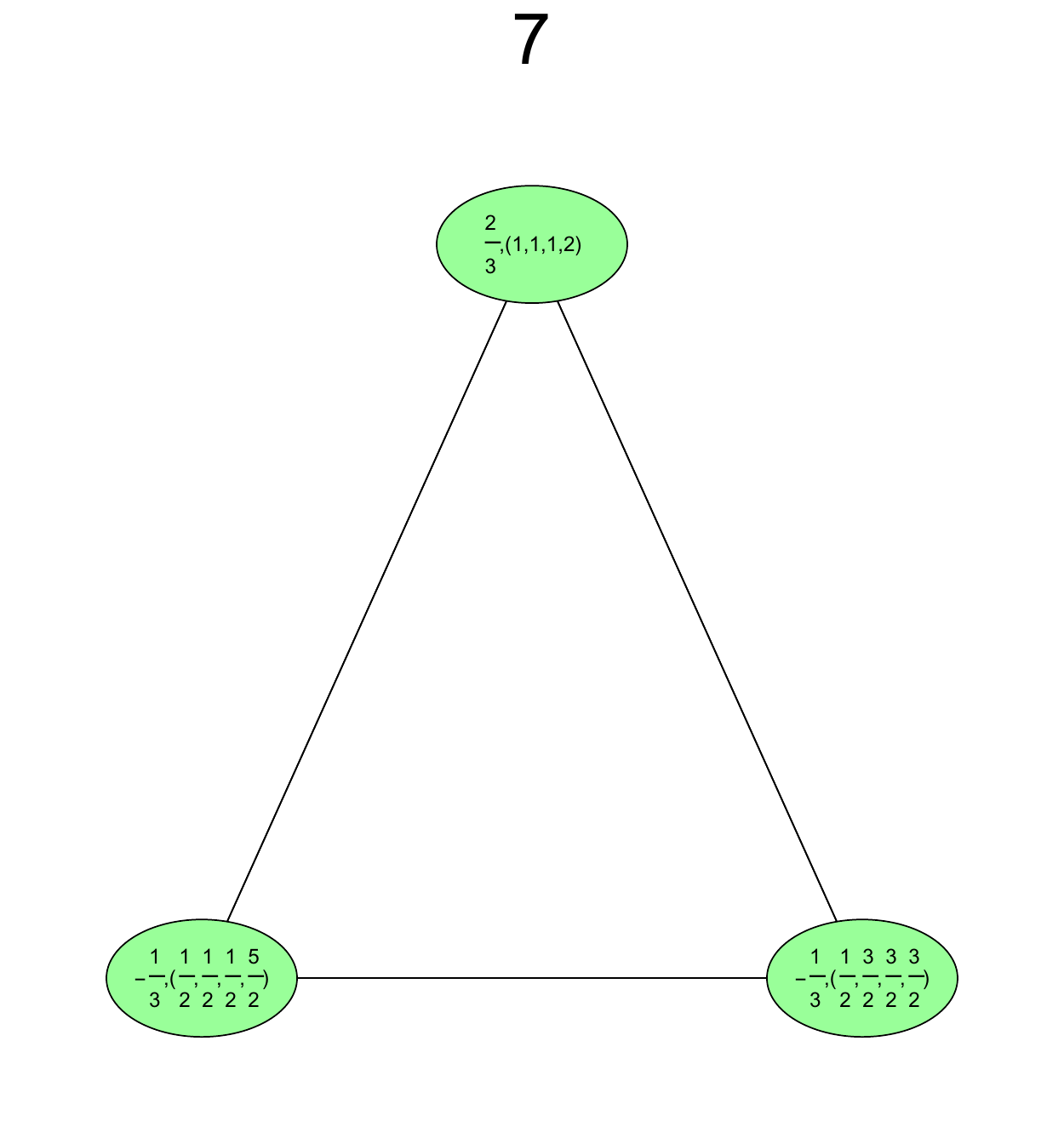}}
\scalebox{0.80}{\includegraphics{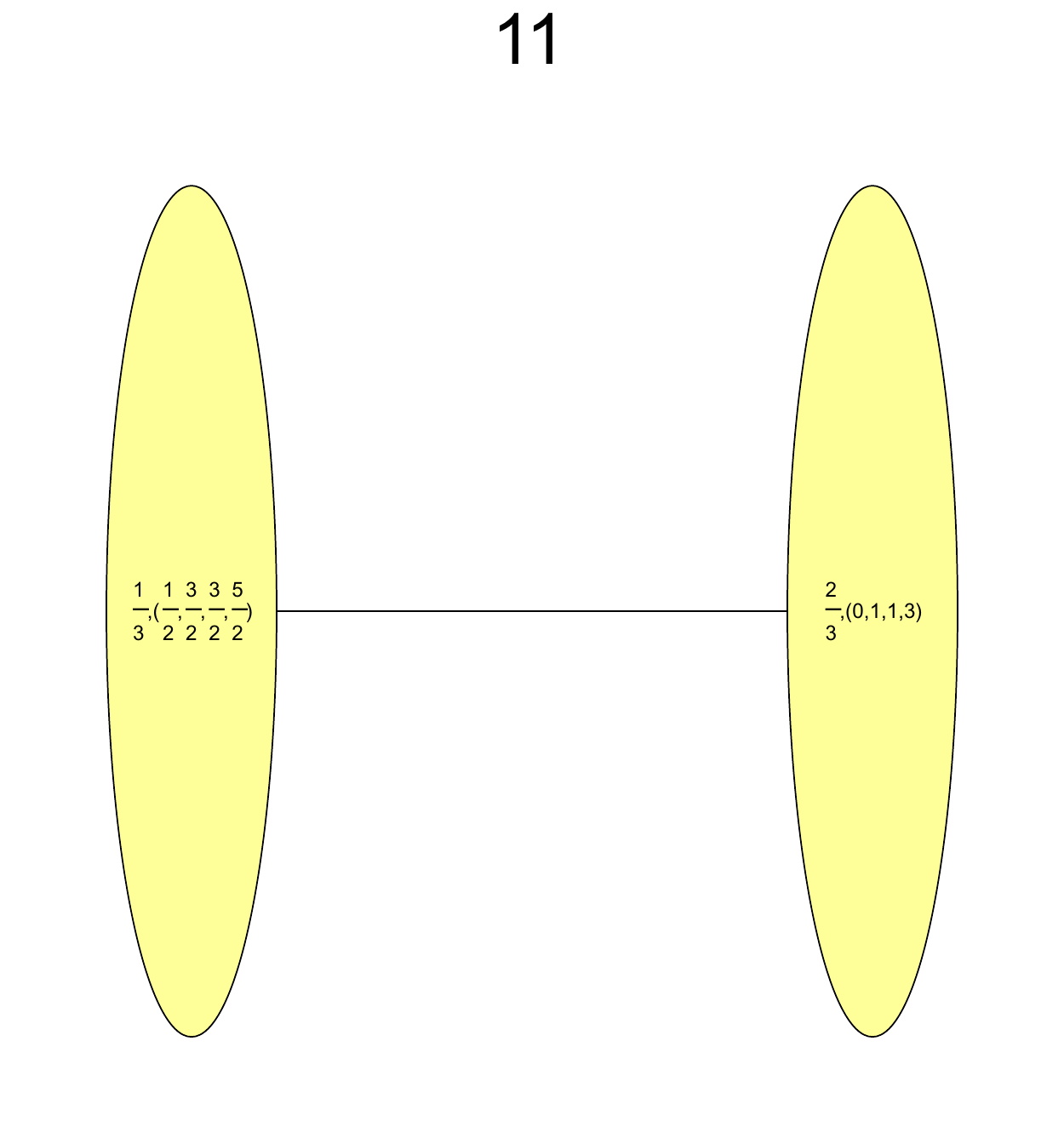}}
\caption{ \label{quarks} Hadrons with charges.  } \end{figure} 

\begin{figure}
\scalebox{0.80}{\includegraphics{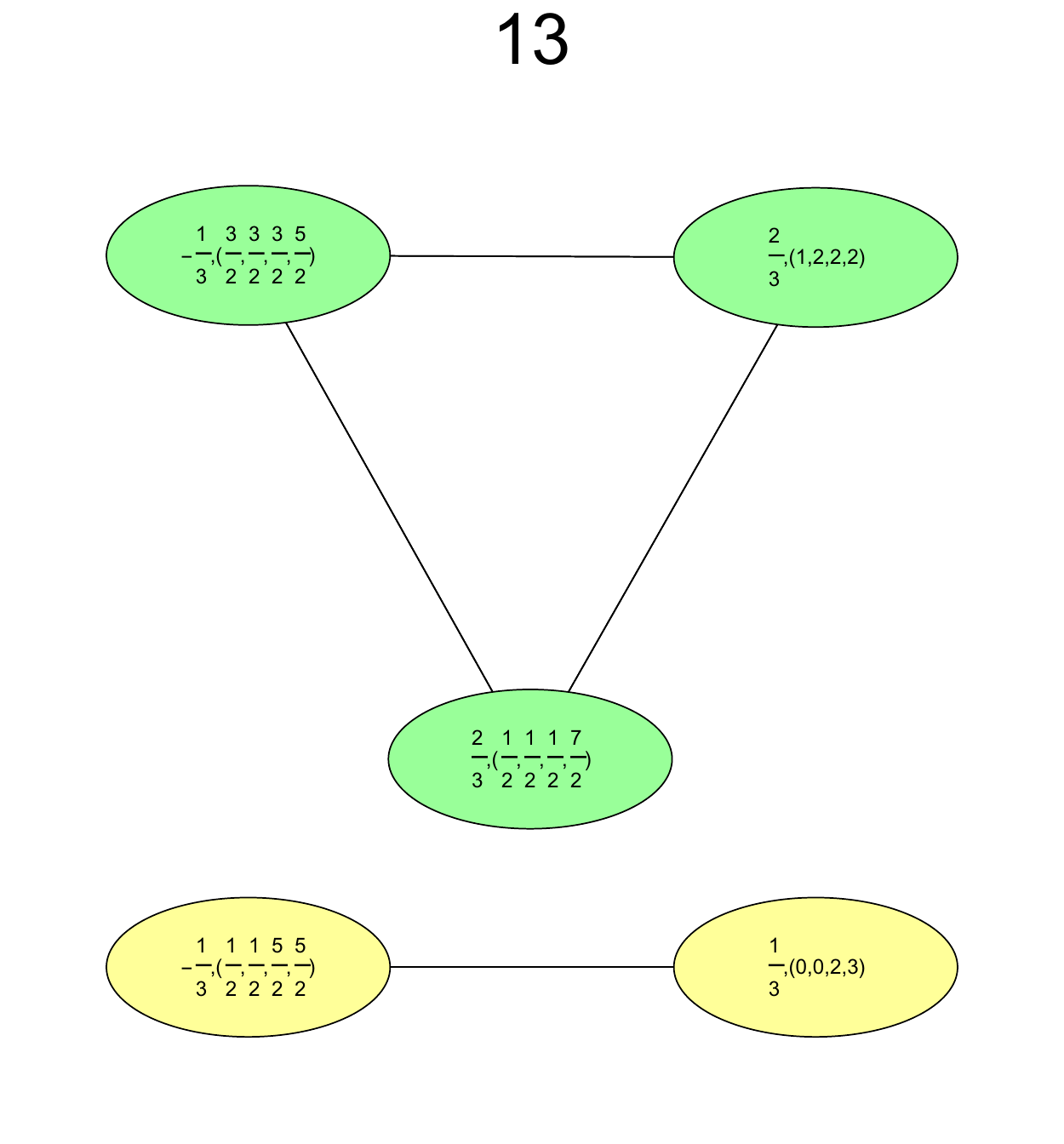}}
\scalebox{0.80}{\includegraphics{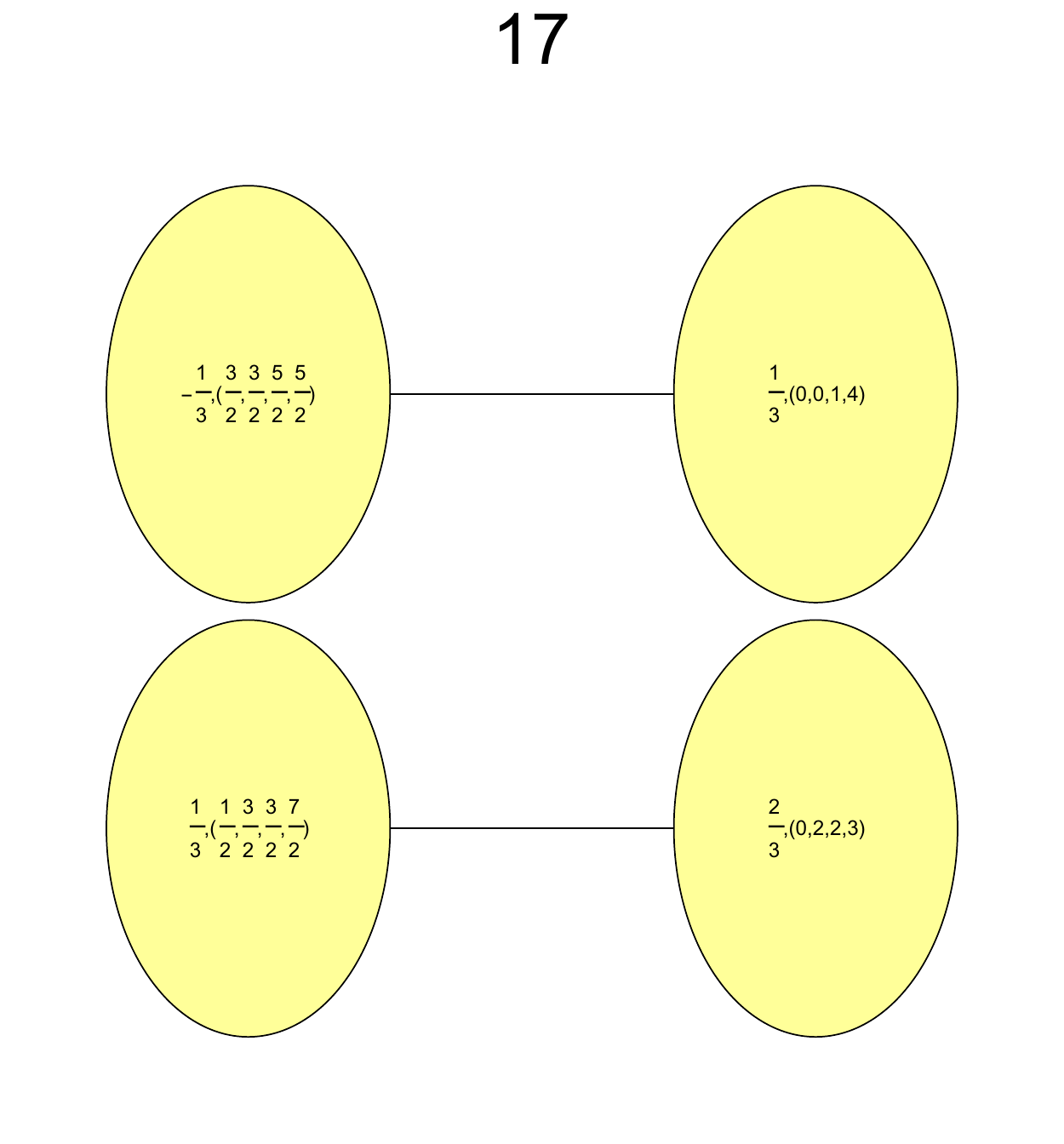}}
\caption{ \label{quarks} Hadrons with charges.  } \end{figure} 

\begin{figure}
\scalebox{0.80}{\includegraphics{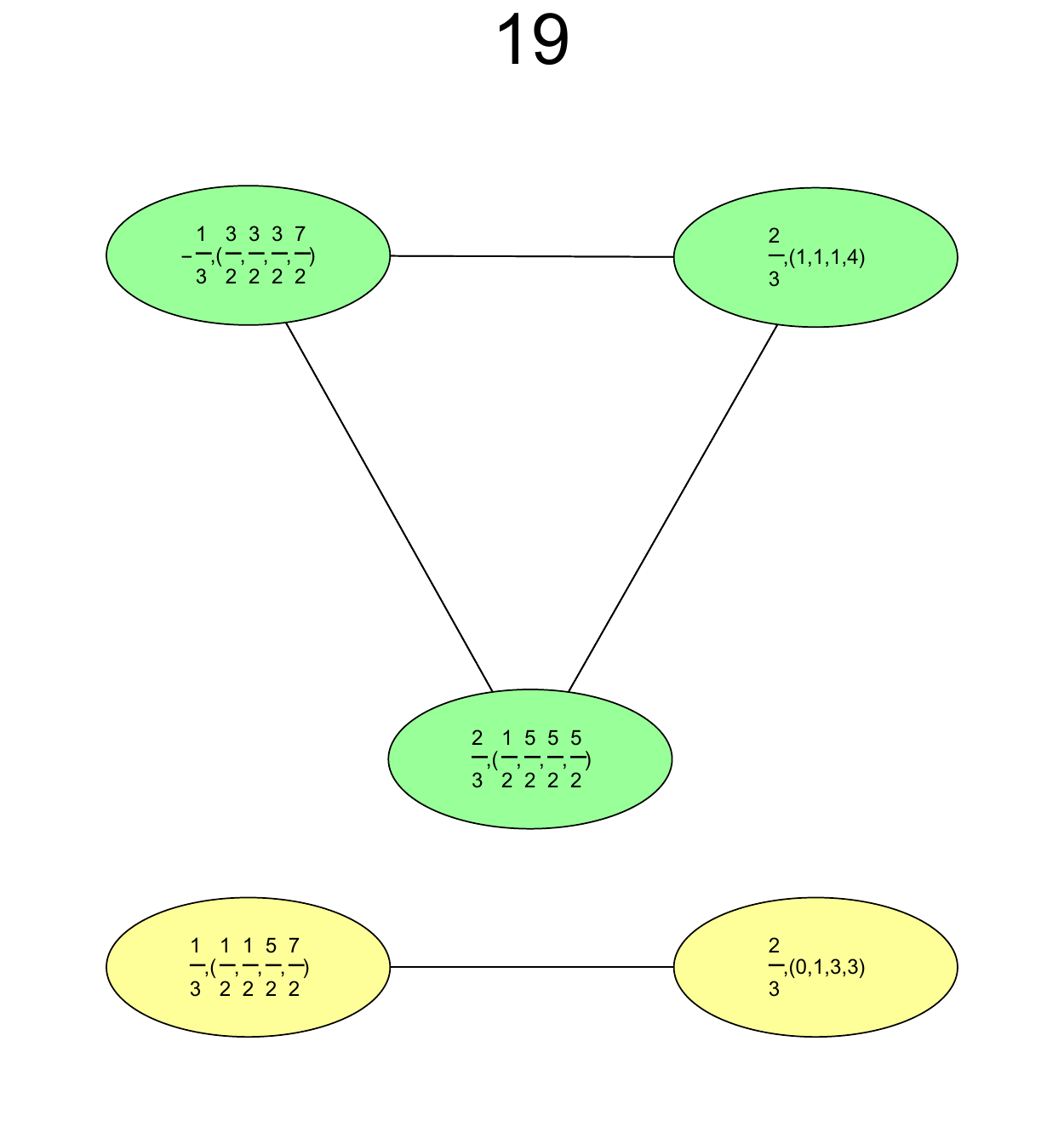}}
\scalebox{0.80}{\includegraphics{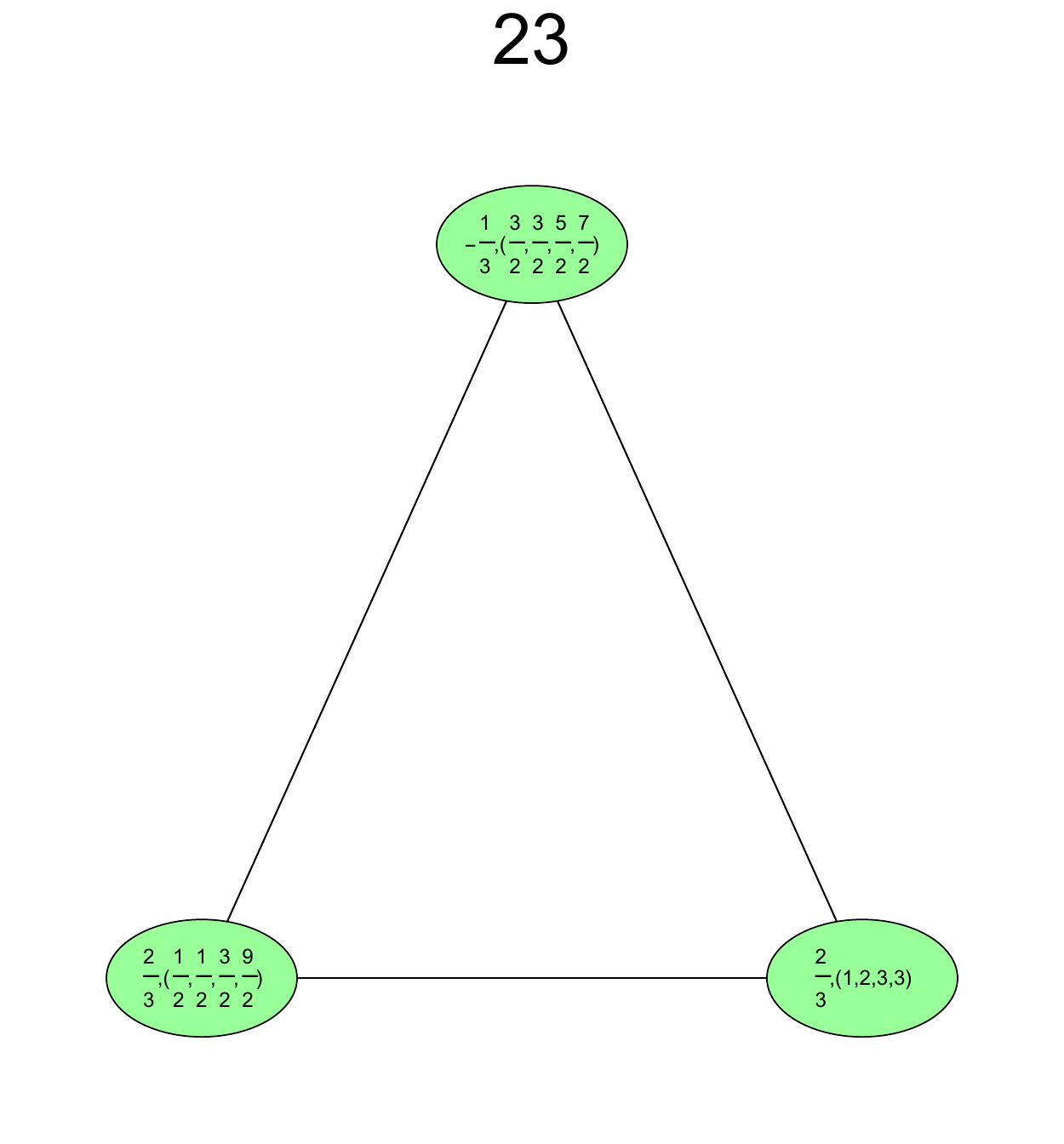}}
\caption{ \label{quarks} Hadrons with charges.  } \end{figure}

\begin{figure}
\scalebox{0.80}{\includegraphics{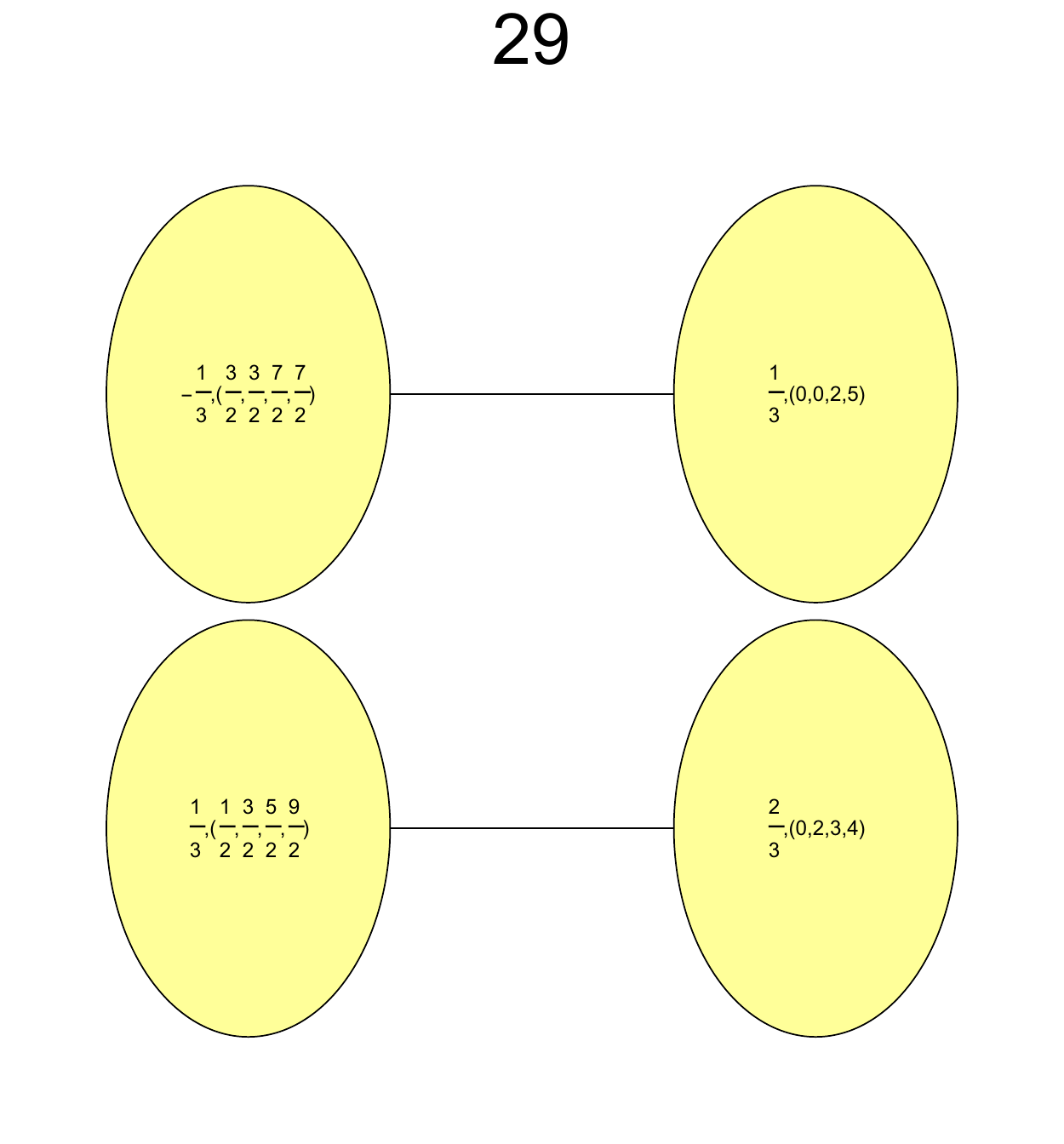}}
\scalebox{0.80}{\includegraphics{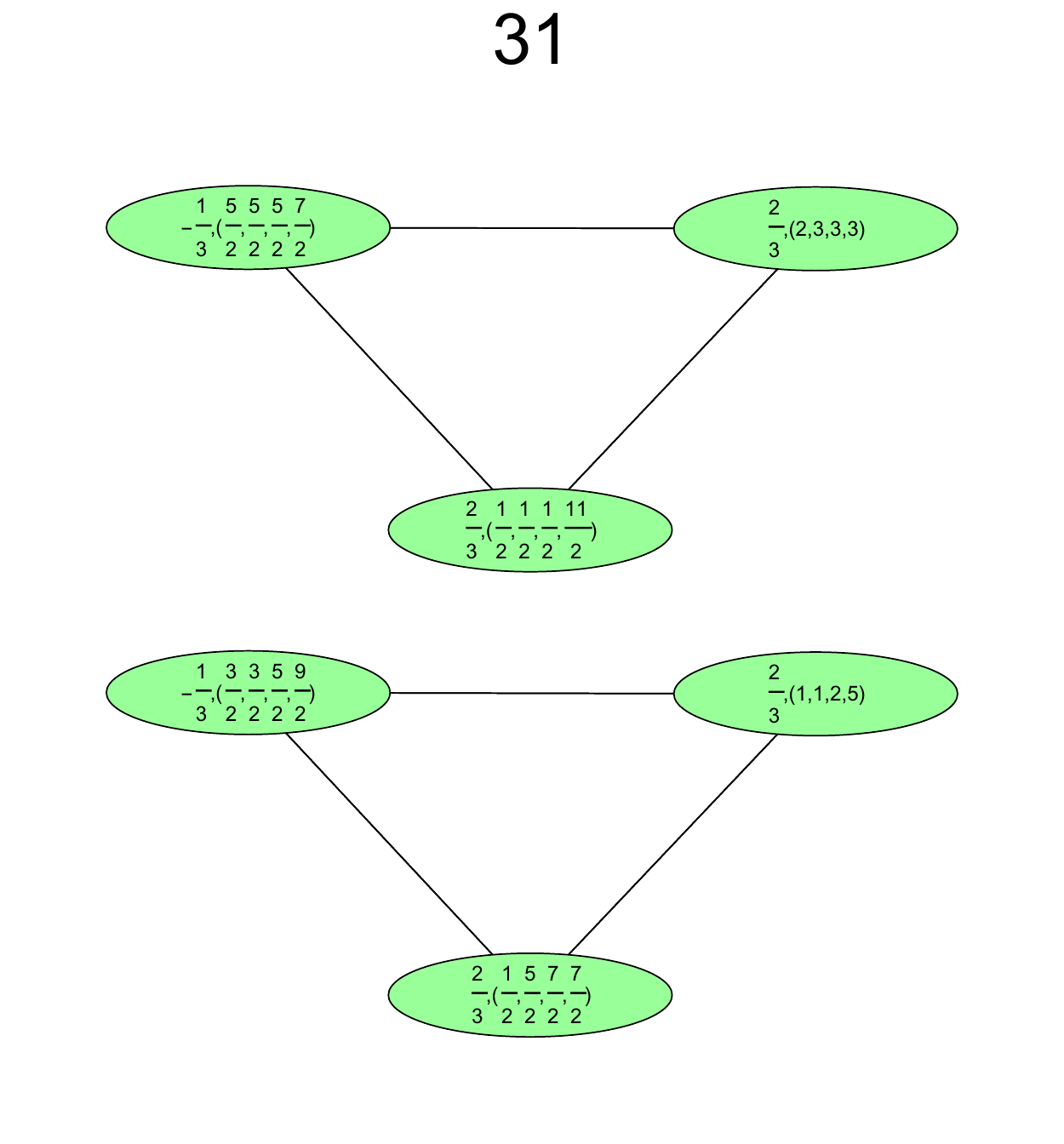}}
\caption{ \label{quarks} Hadrons with charges.  } \end{figure} 

\begin{figure}
\scalebox{0.80}{\includegraphics{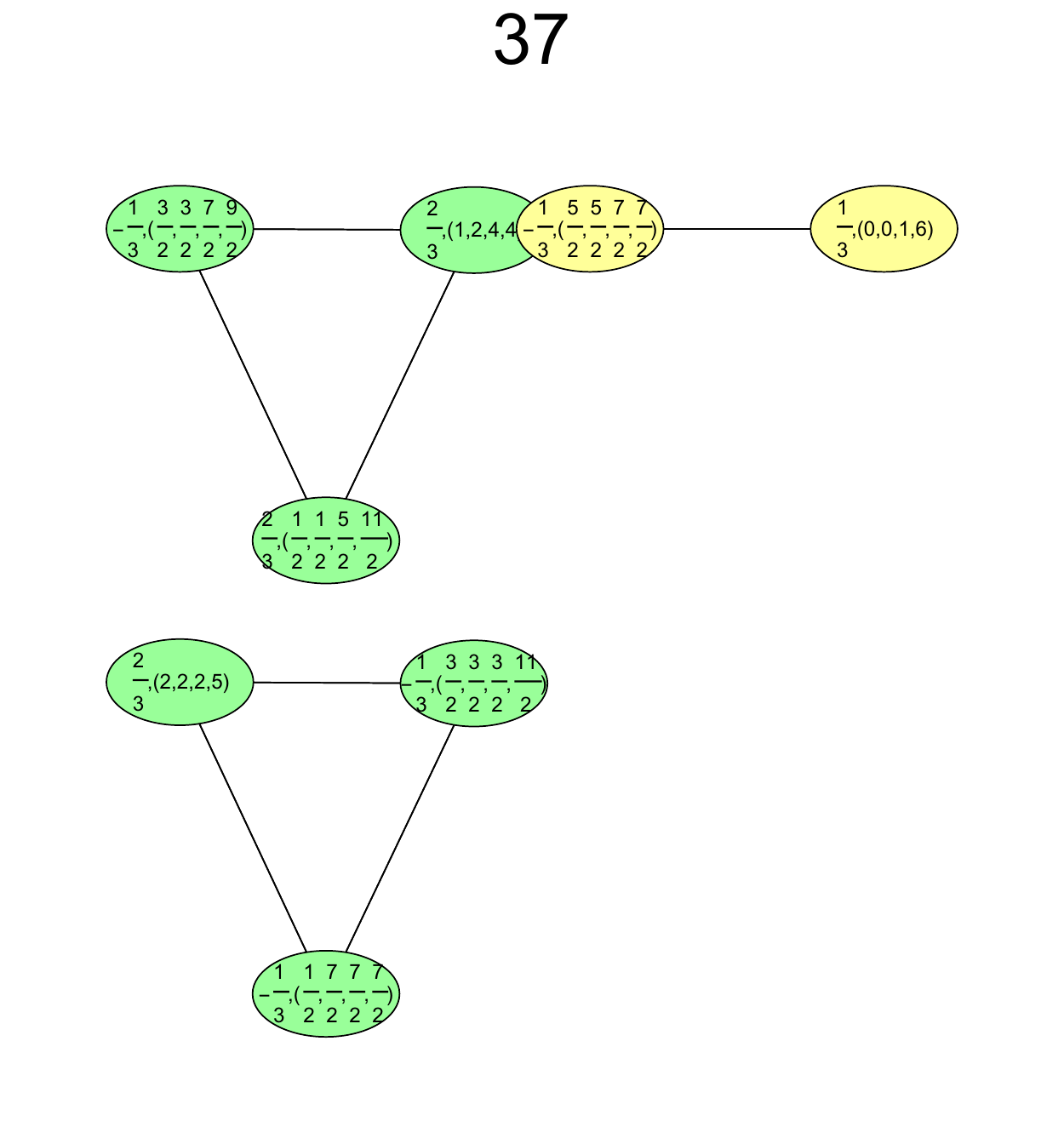}}
\scalebox{0.80}{\includegraphics{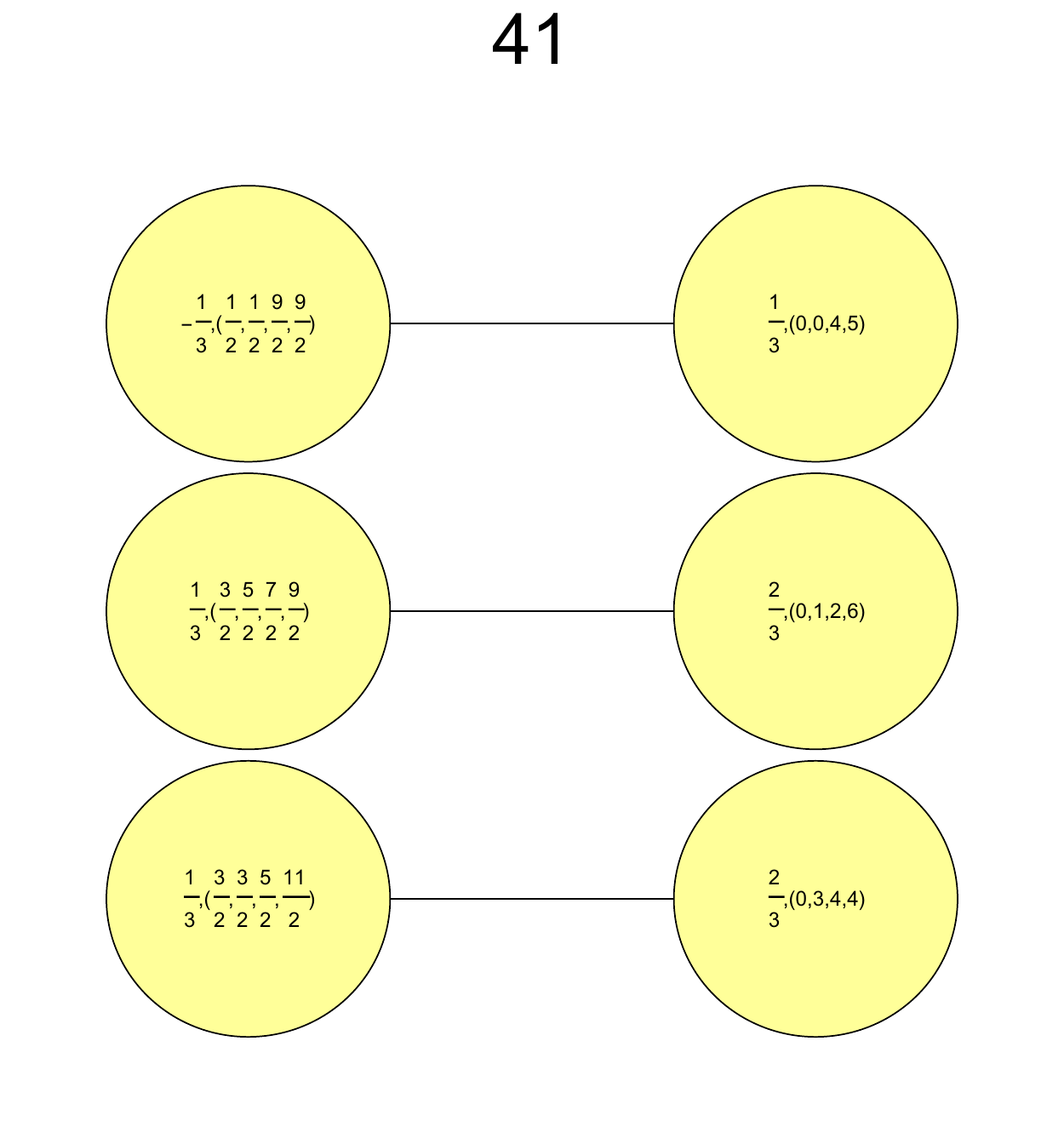}}
\caption{ \label{quarks} Hadrons with charges.  } 
\end{figure}

\begin{figure}
\scalebox{0.80}{\includegraphics{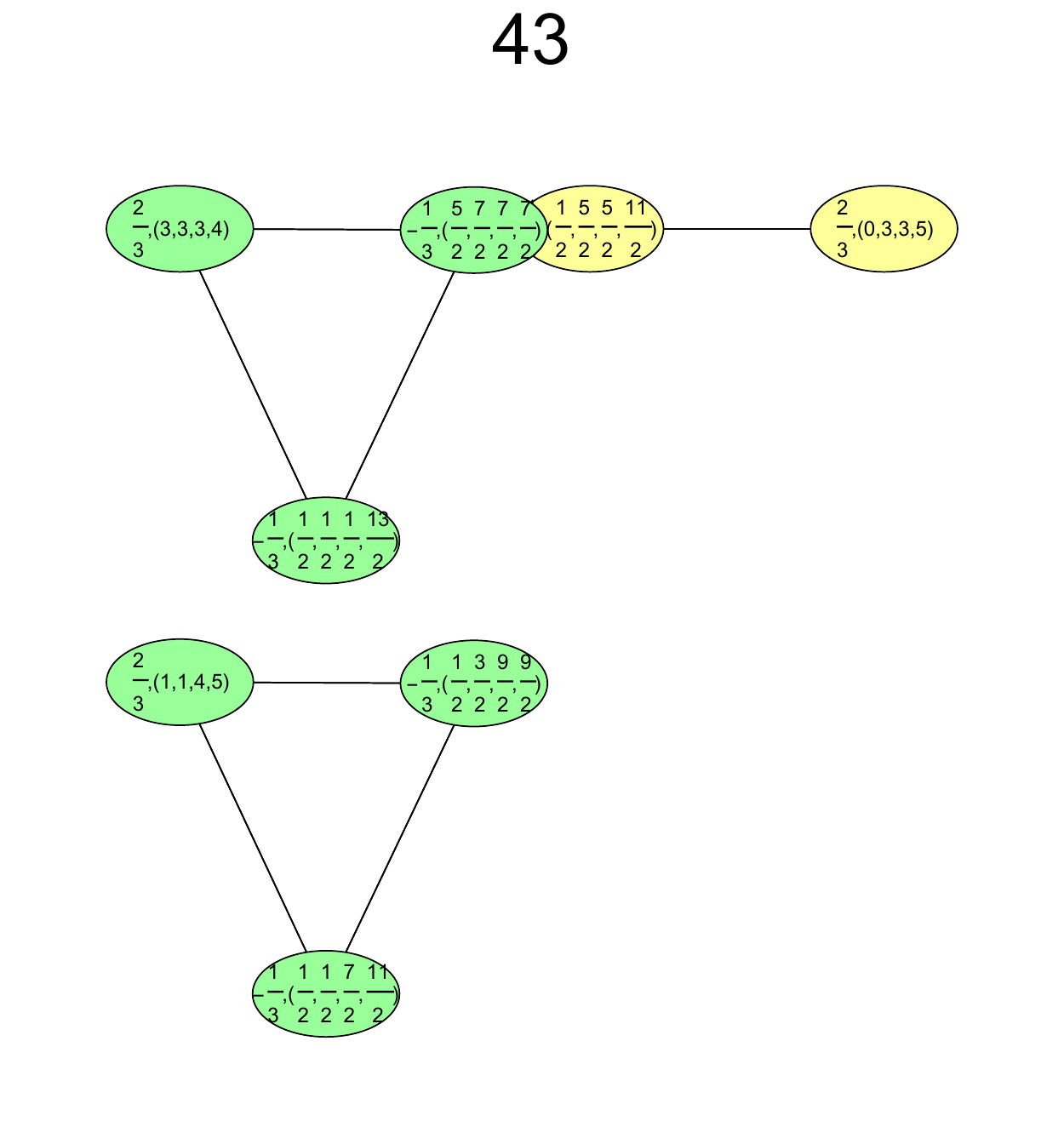}}
\scalebox{0.80}{\includegraphics{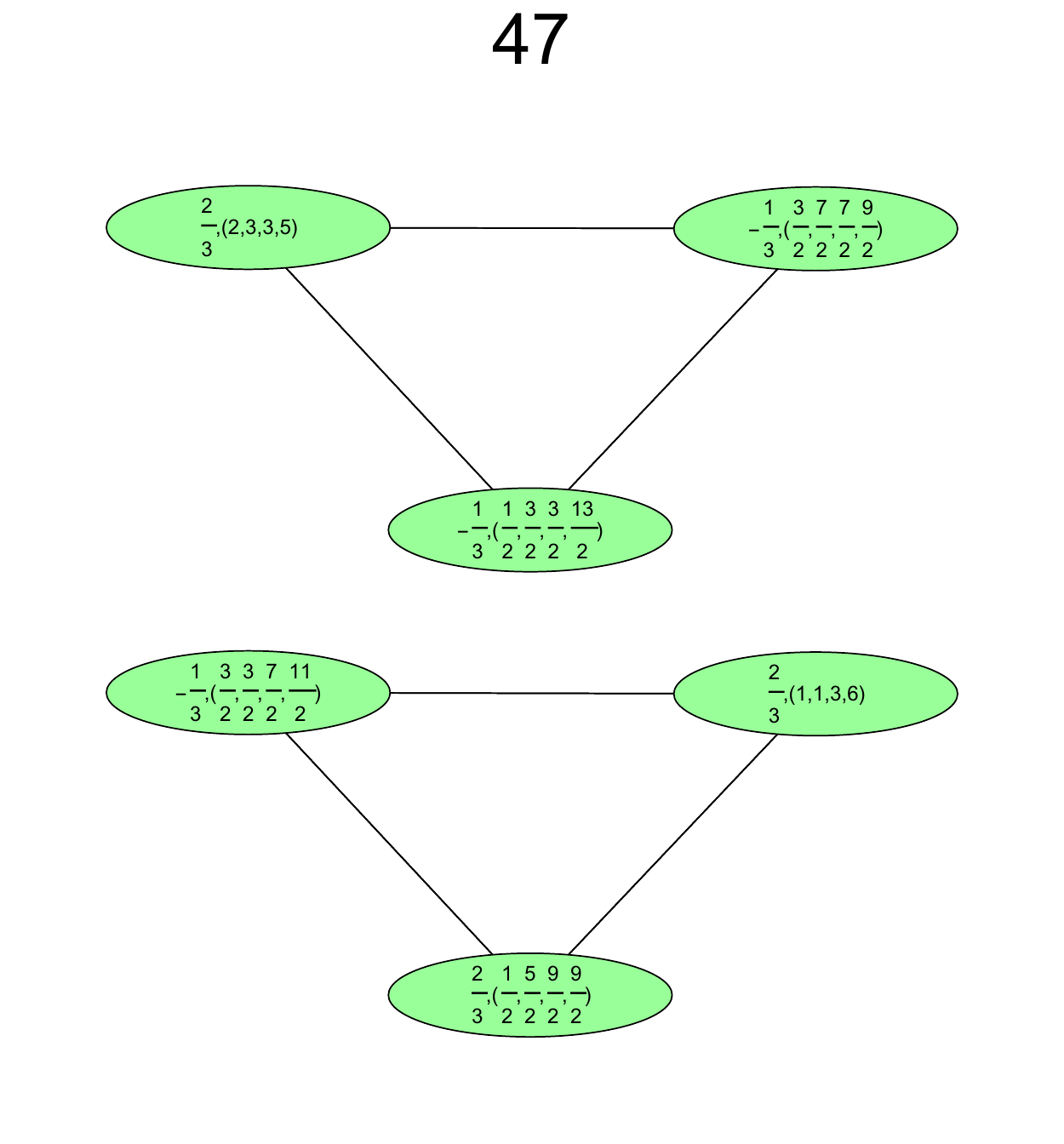}}
\caption{ \label{quarks} Hadrons with charges.  } \end{figure} 

\begin{figure}
\scalebox{0.80}{\includegraphics{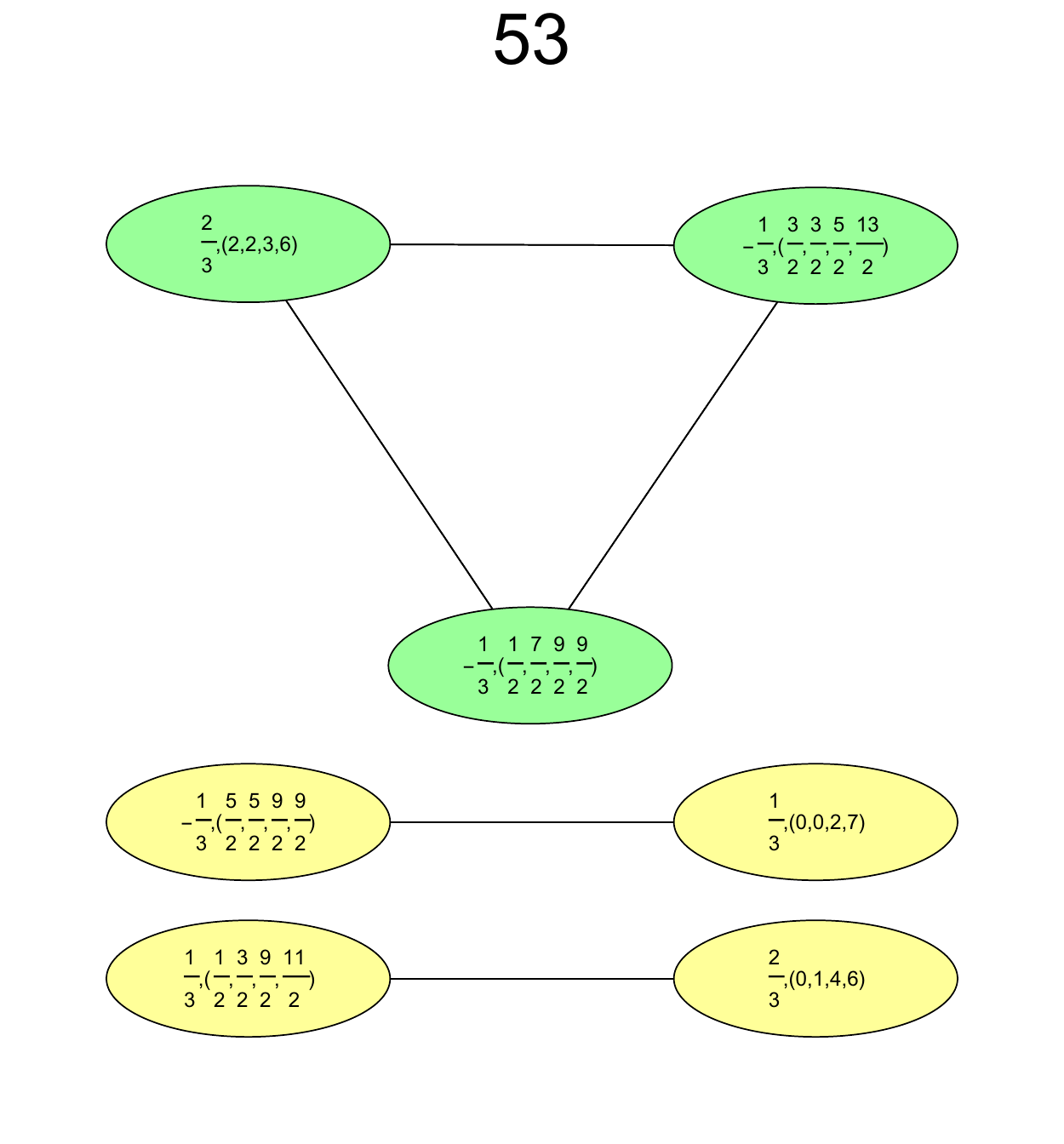}}
\scalebox{0.80}{\includegraphics{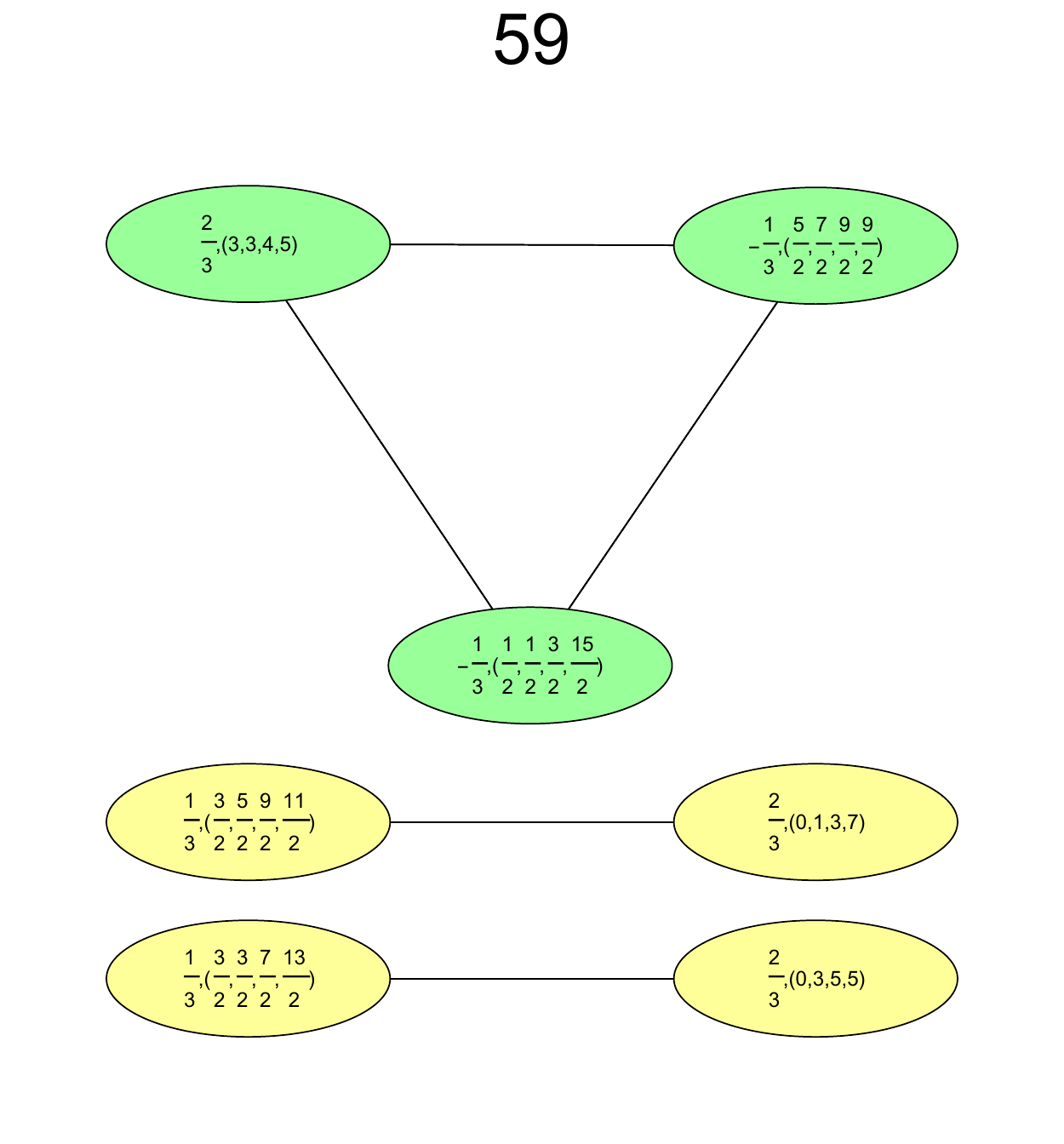}}
\caption{ \label{quarks} Hadrons with charges.  } 
\end{figure}

\begin{figure}
\scalebox{0.80}{\includegraphics{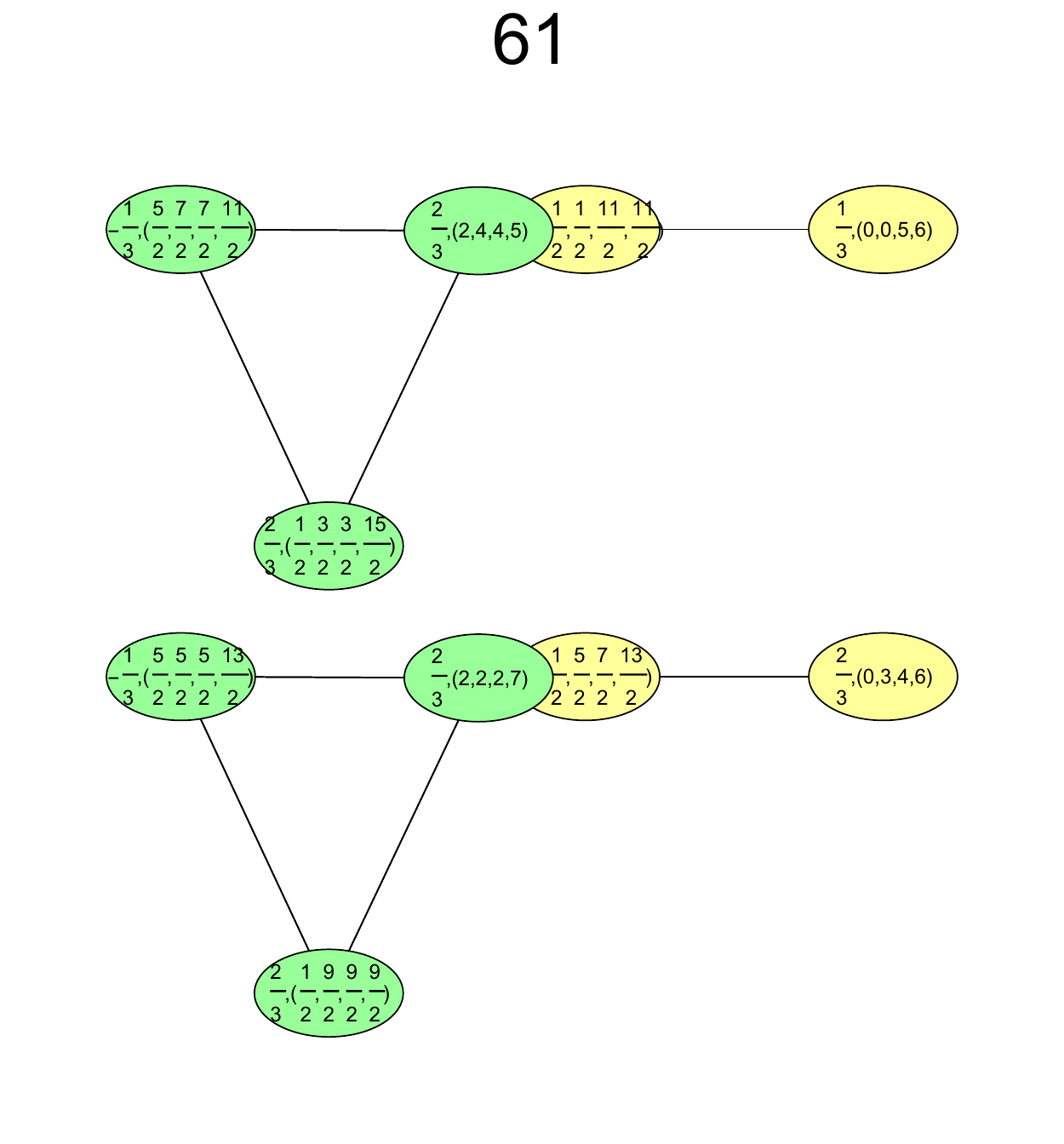}}
\scalebox{0.80}{\includegraphics{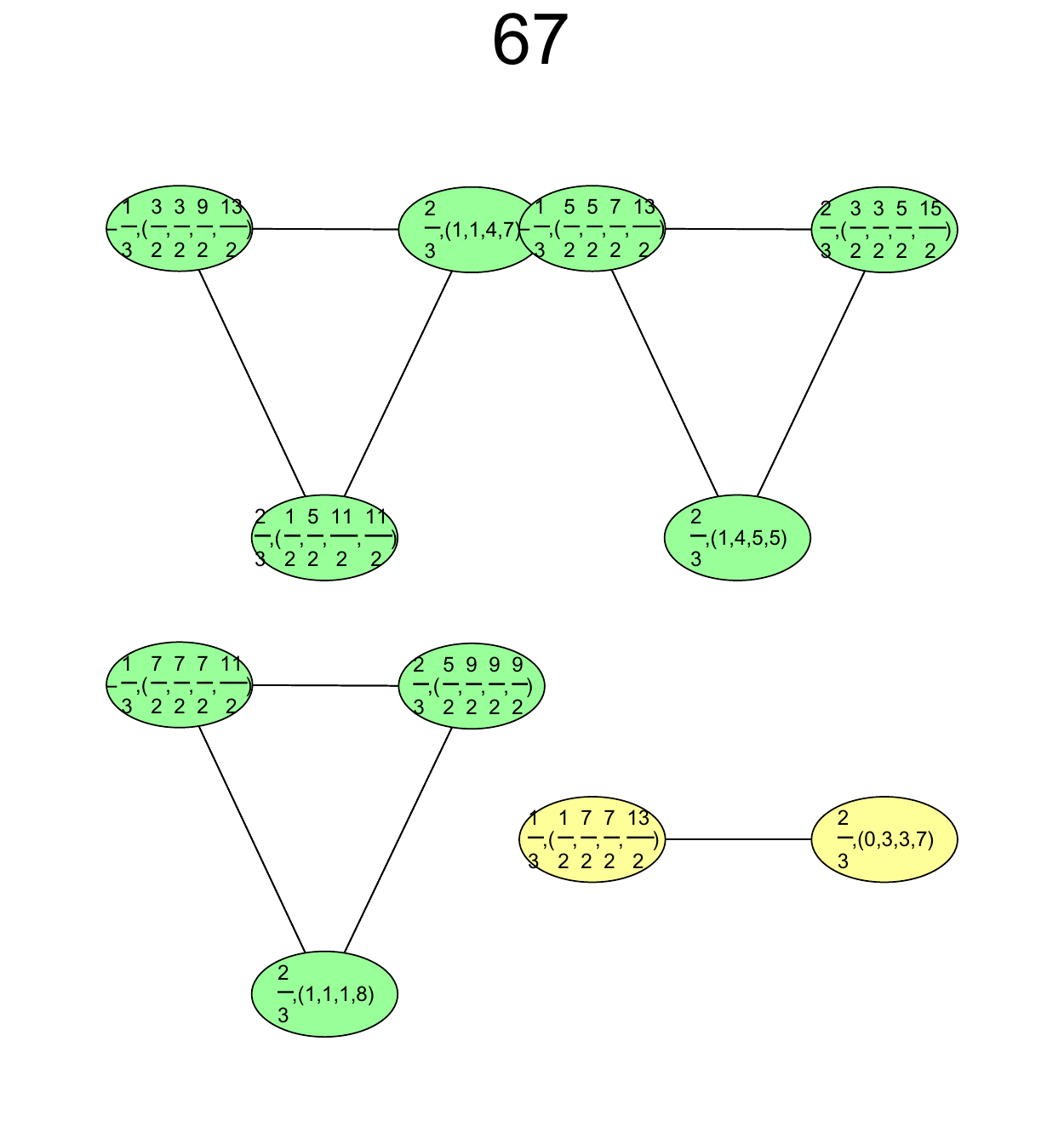}}
\caption{ \label{quarks} Hadrons with charges.  } \end{figure}

\begin{figure}
\scalebox{0.80}{\includegraphics{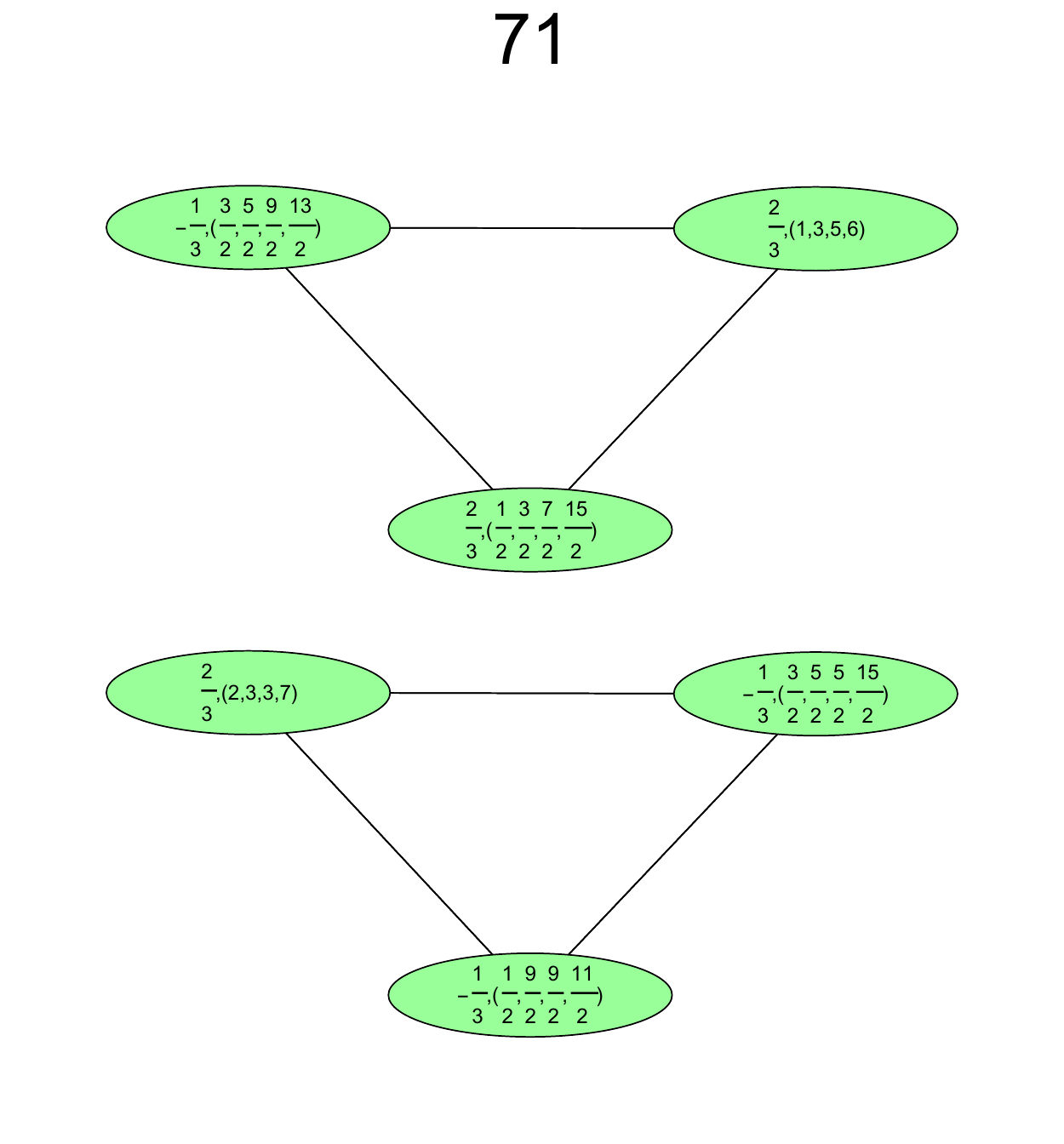}}
\scalebox{0.80}{\includegraphics{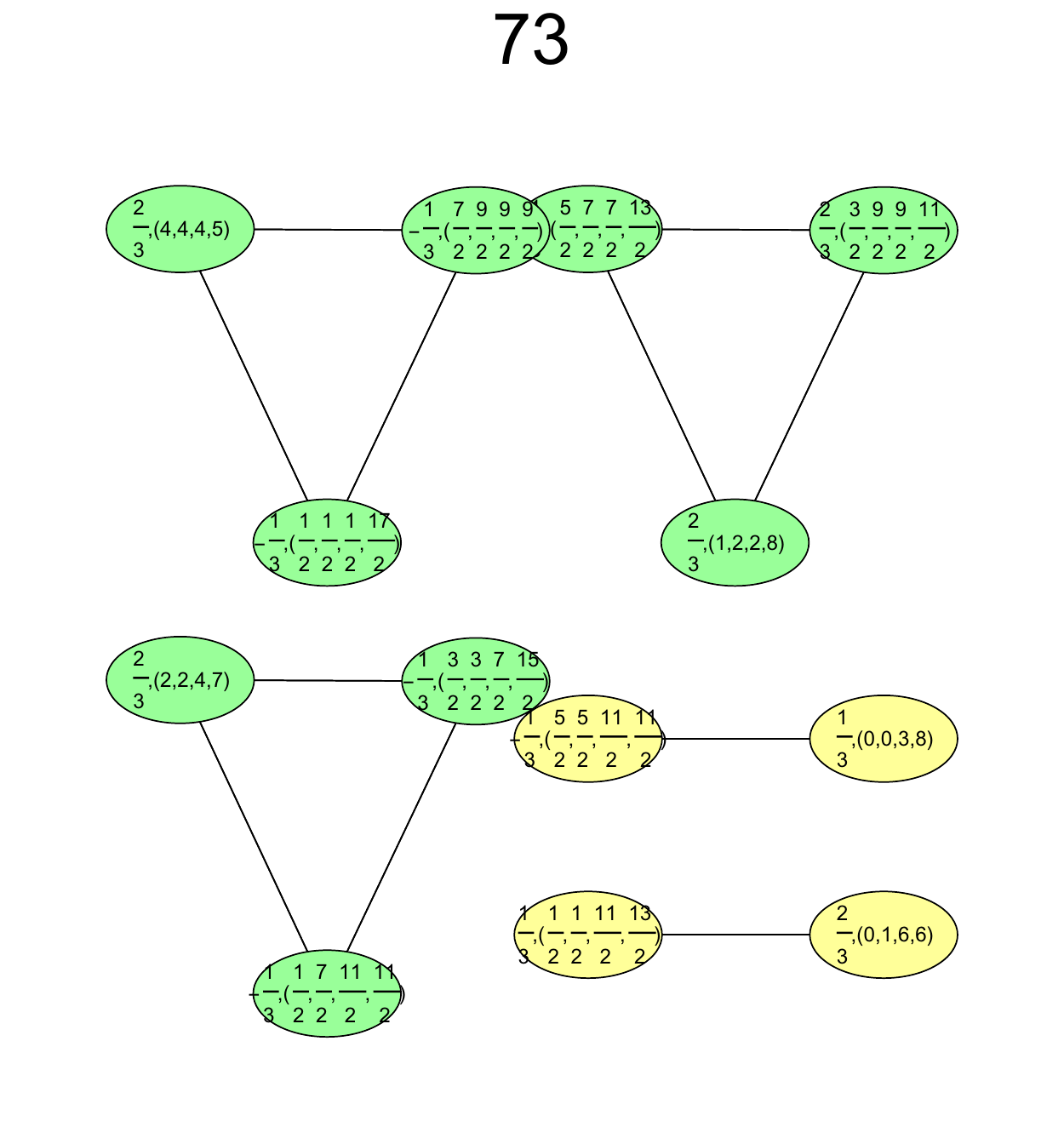}}
\caption{ \label{quarks} Hadrons with charges.  } 
\end{figure}

\begin{figure}
\scalebox{0.80}{\includegraphics{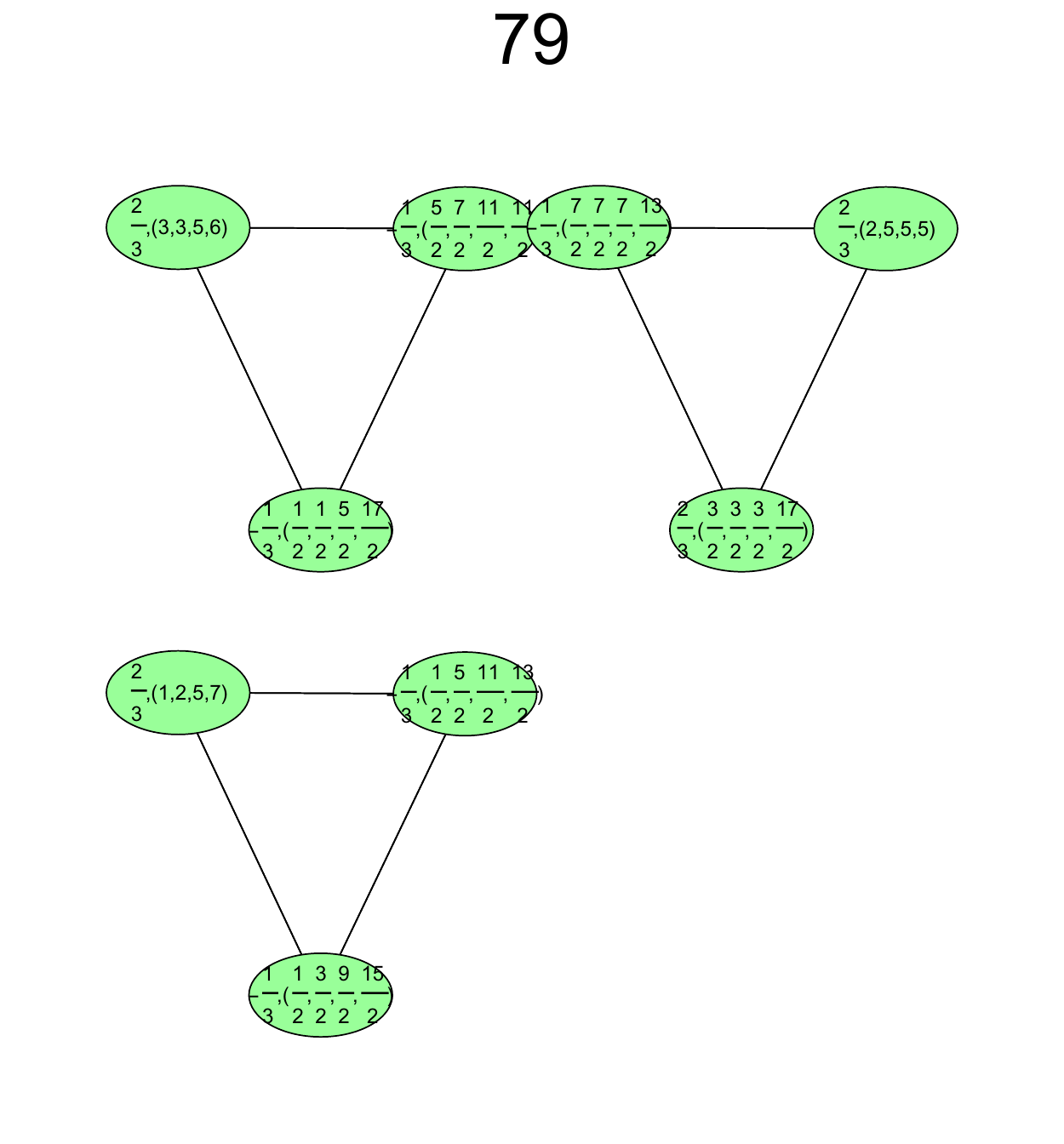}}
\scalebox{0.80}{\includegraphics{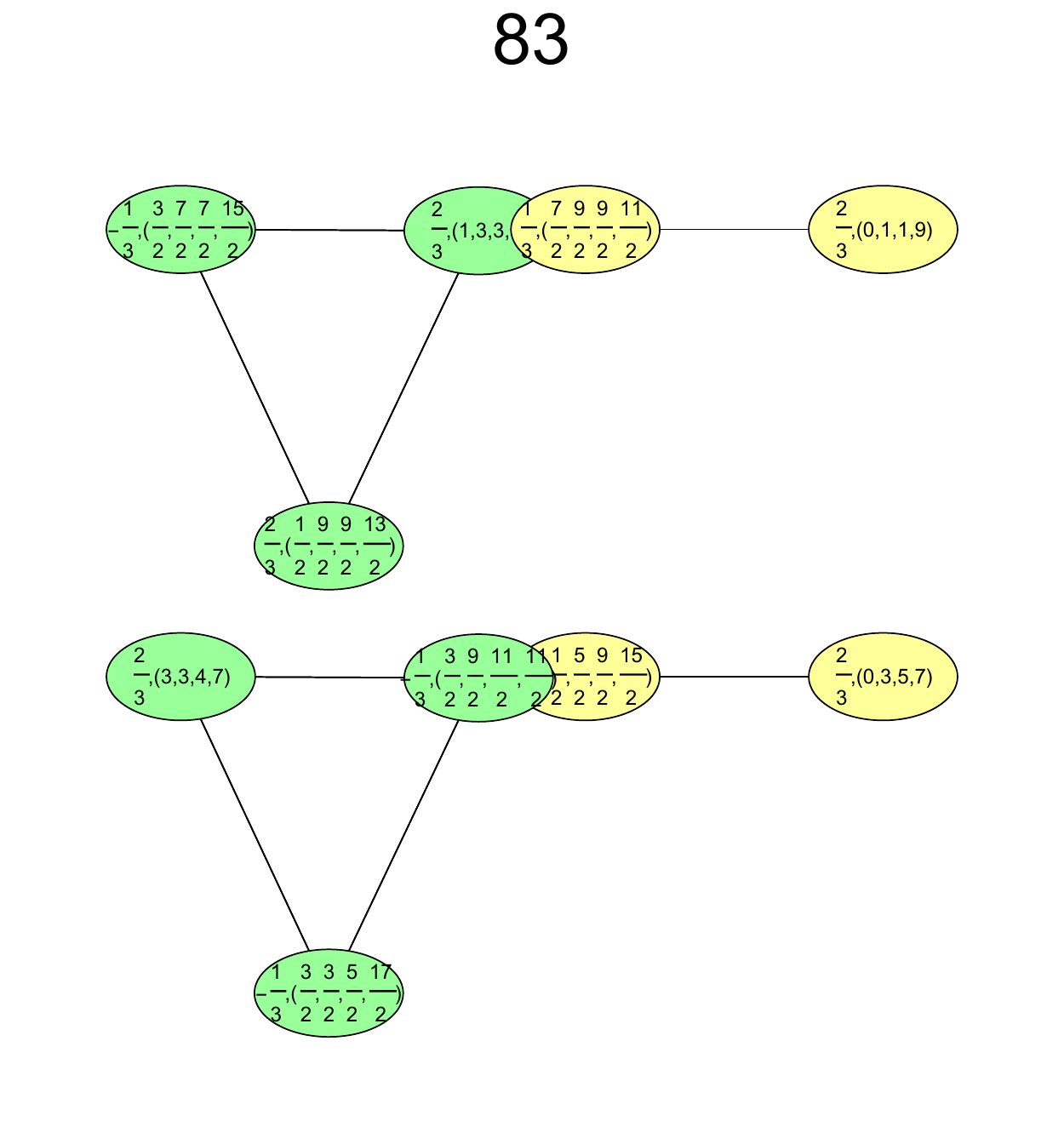}}
\caption{ \label{quarks} Hadrons with charges.  } \end{figure} 

\begin{figure}
\scalebox{0.80}{\includegraphics{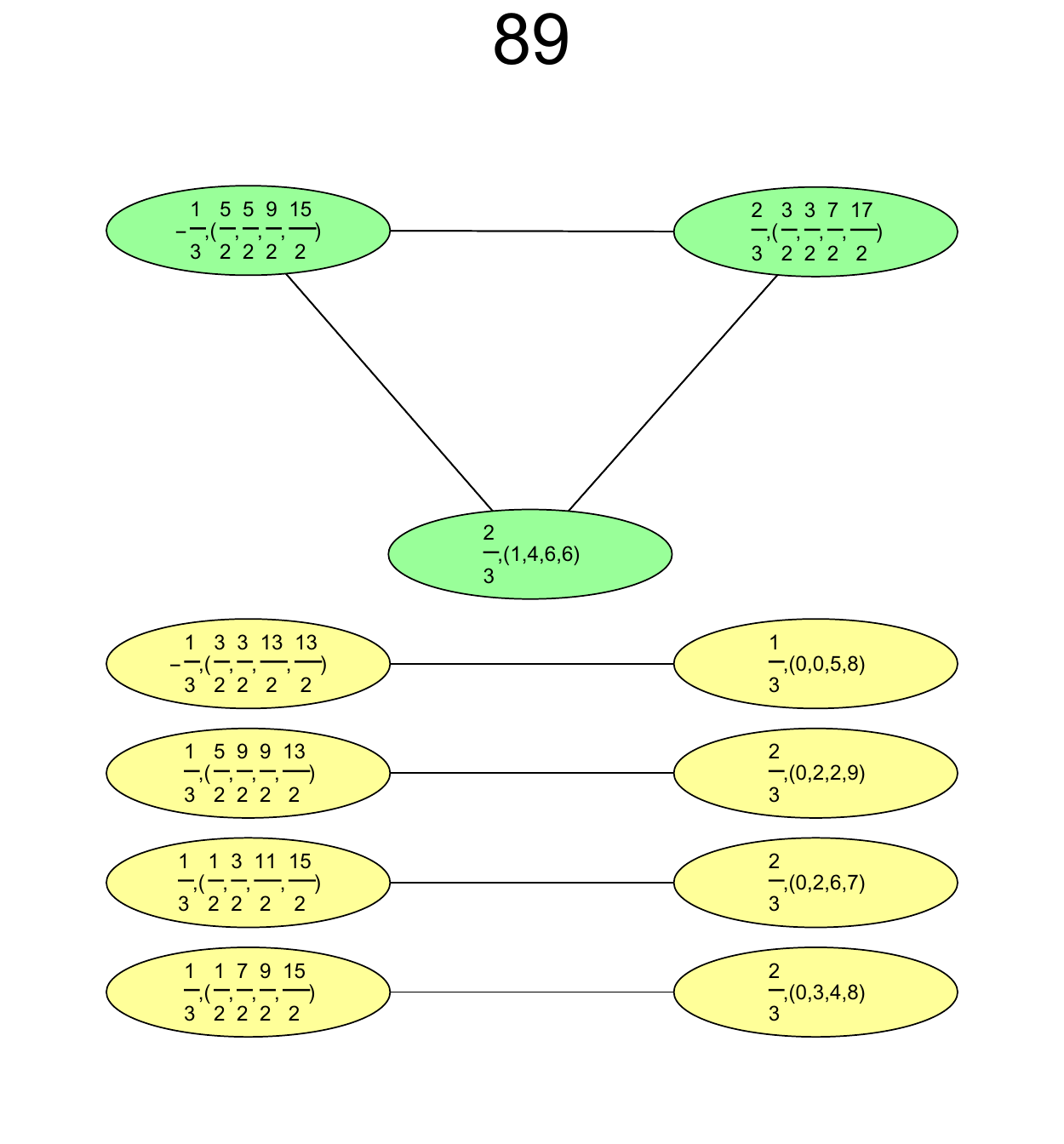}}
\scalebox{0.80}{\includegraphics{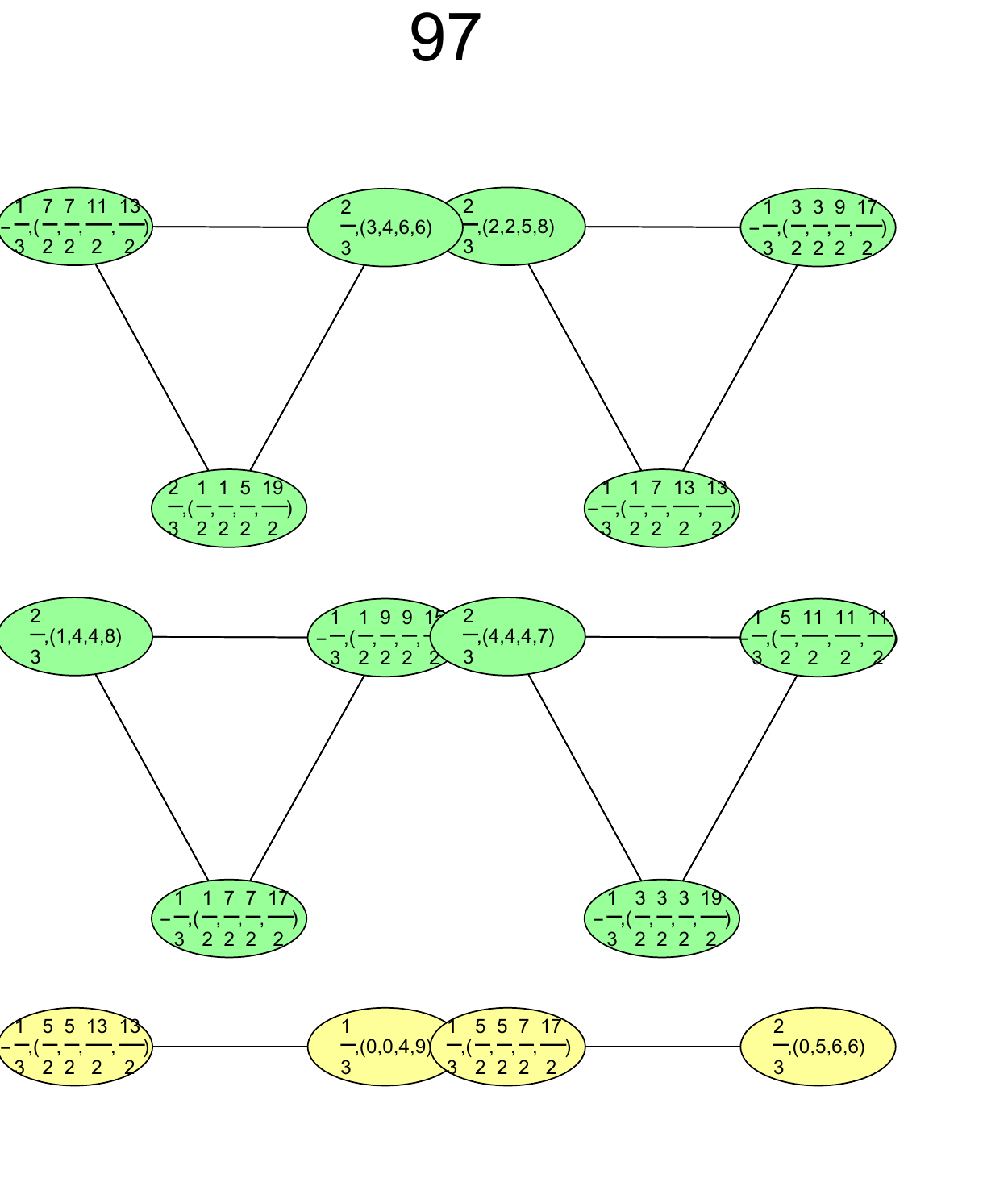}}
\caption{ \label{quarks} Hadrons with charges.  } 
\end{figure}

\begin{figure}
\scalebox{0.80}{\includegraphics{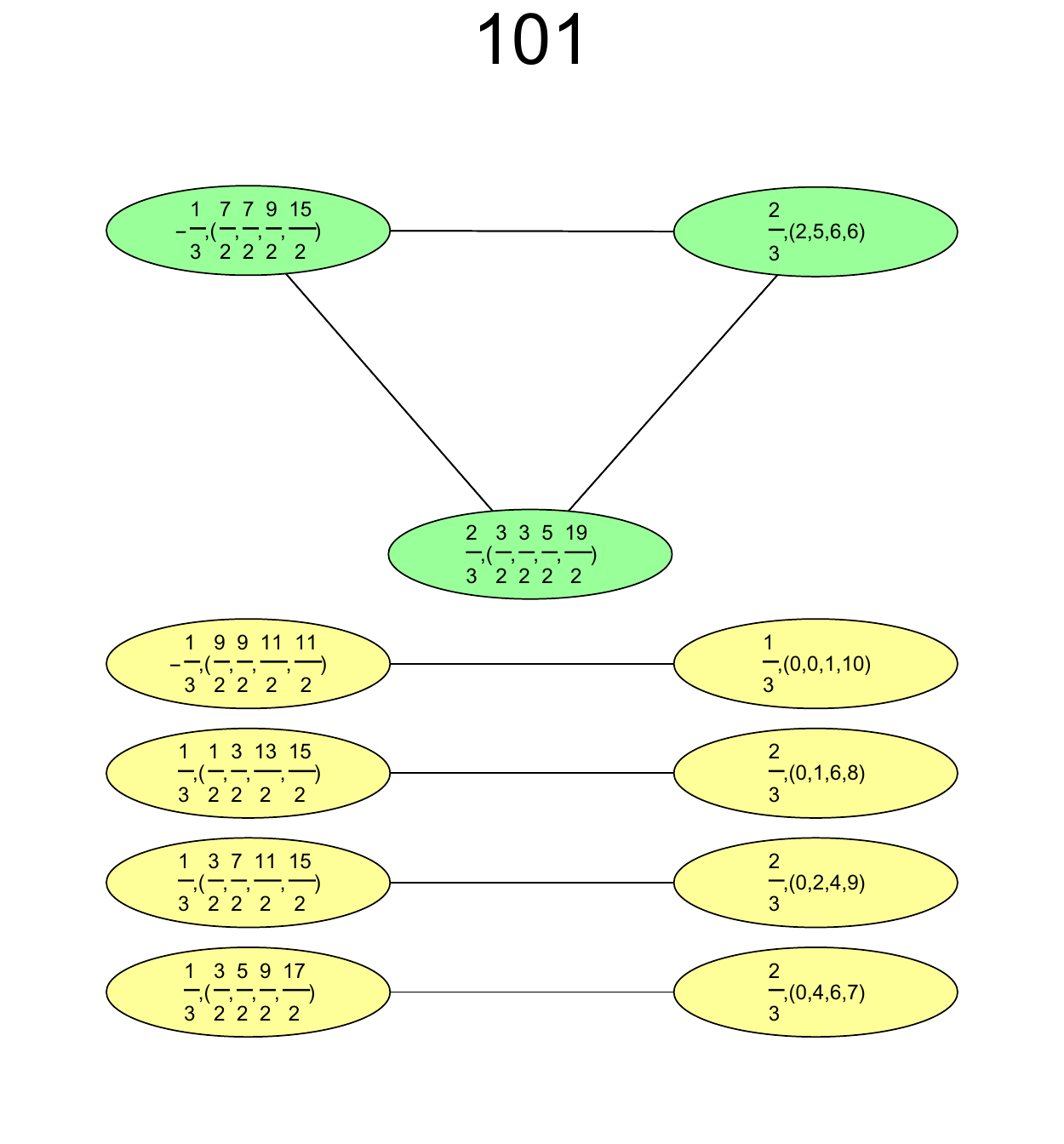}}
\scalebox{0.80}{\includegraphics{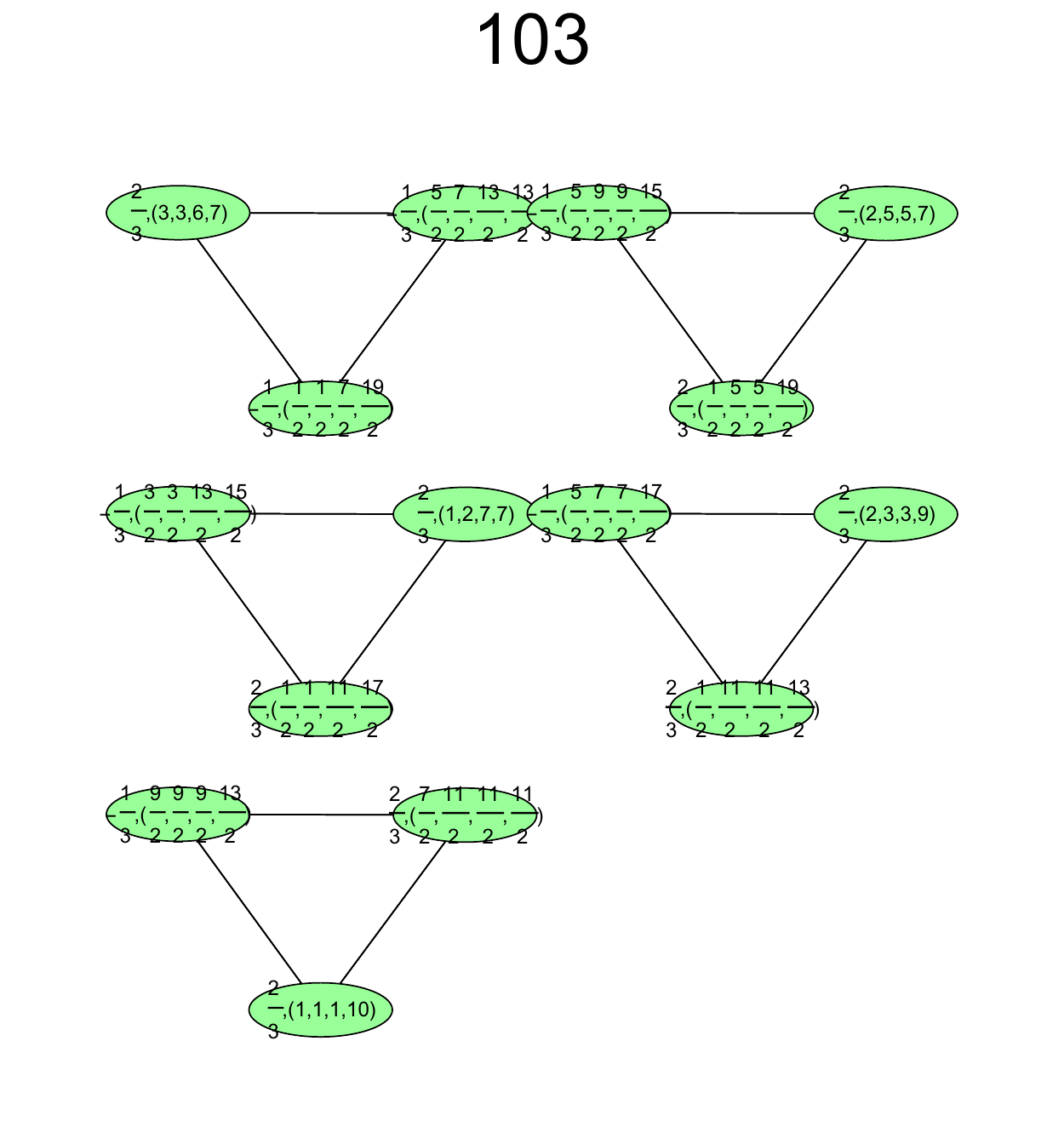}}
\caption{ \label{quarks} Hadrons with charges.  } \end{figure} 

\begin{figure}
\scalebox{0.80}{\includegraphics{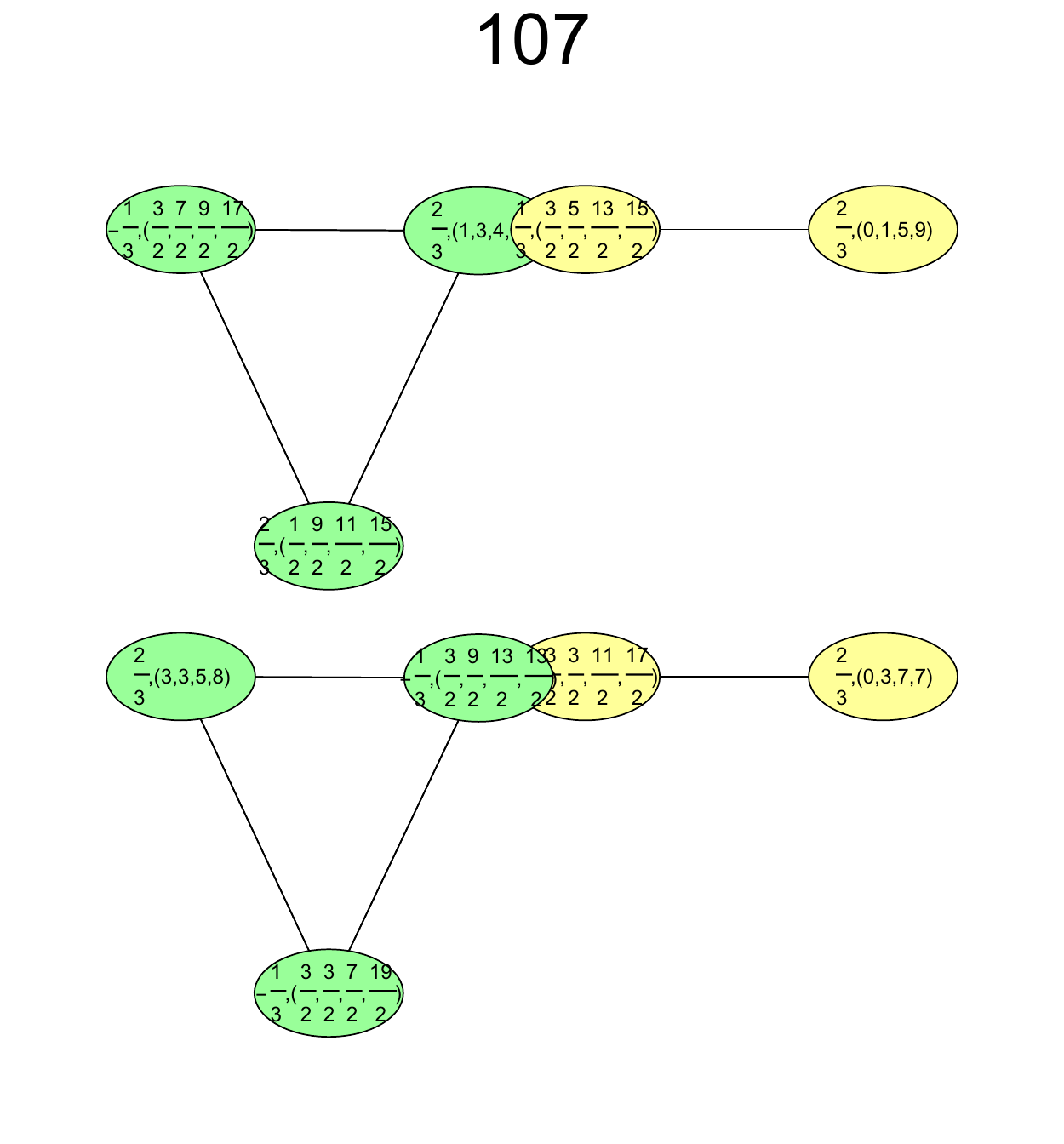}}
\scalebox{0.80}{\includegraphics{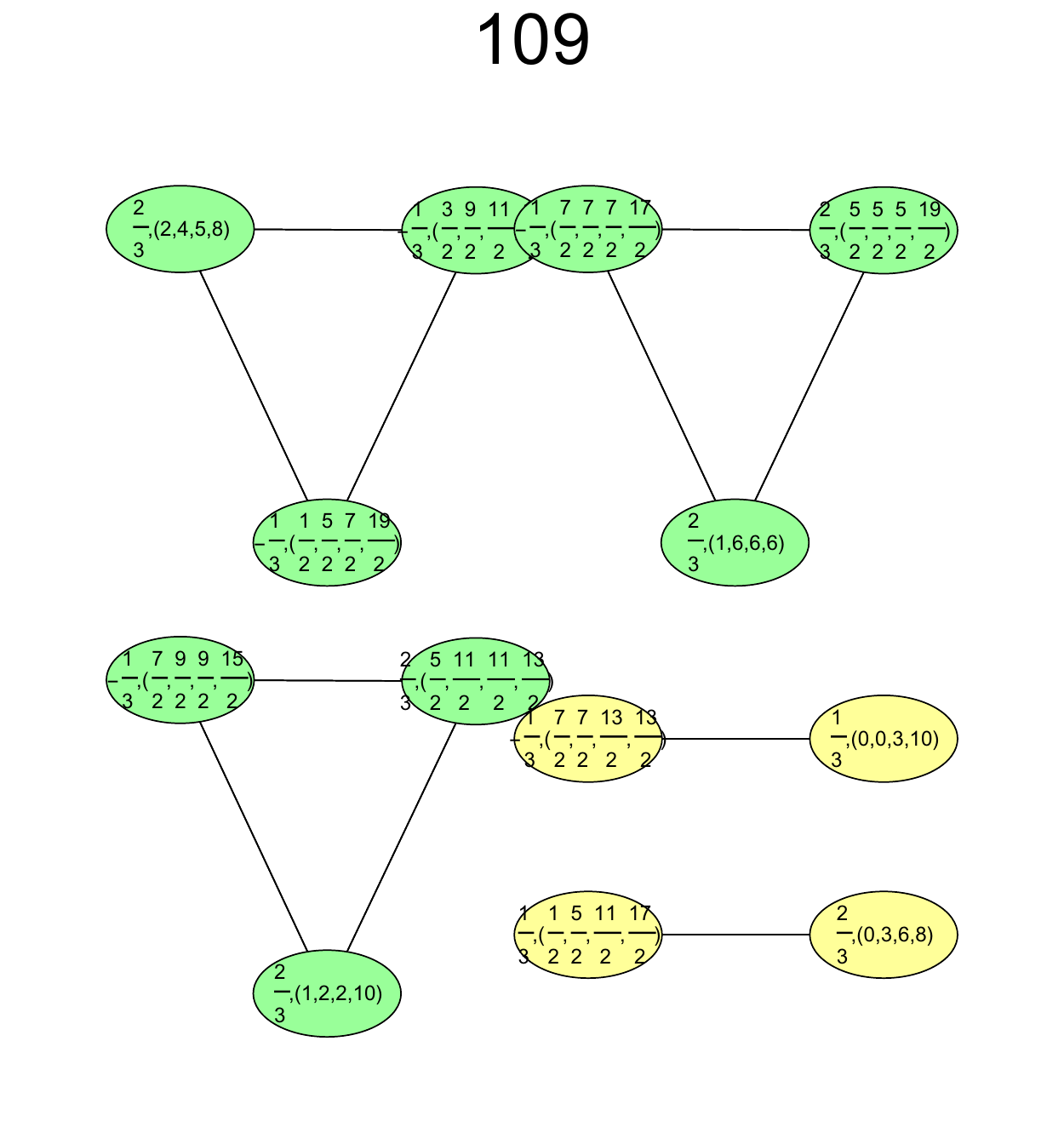}}
\caption{ \label{quarks} Hadrons with charges.  } 
\end{figure}

\pagebreak

\bibliographystyle{plain}

\end{document}